\begin{document}

\title{Quantization Design and Channel Estimation for Massive MIMO Systems with One-Bit ADCs}

\author{Feiyu Wang, Jun Fang, Hongbin Li,~\IEEEmembership{Senior
Member,~IEEE}, and Shaoqian Li,~\IEEEmembership{Fellow,~IEEE}
\thanks{Feiyu Wang, Jun Fang, and Shaoqian Li are with the National Key Laboratory
of Science and Technology on Communications, University of
Electronic Science and Technology of China, Chengdu 611731, China,
Email: JunFang@uestc.edu.cn}
\thanks{Hongbin Li is
with the Department of Electrical and Computer Engineering,
Stevens Institute of Technology, Hoboken, NJ 07030, USA, E-mail:
Hongbin.Li@stevens.edu}
\thanks{This work was supported in part by the National Science
Foundation of China under Grant 61522104, and the National Science
Foundation under Grant ECCS-1408182 and Grant ECCS-1609393.}}

\maketitle

\begin{abstract}
We consider the problem of channel estimation for uplink multiuser
massive MIMO systems, where, in order to significantly reduce the
hardware cost and power consumption, one-bit analog-to-digital
converters (ADCs) are used at the base station (BS) to quantize
the received signal. Channel estimation for one-bit massive MIMO
systems is challenging due to the severe distortion caused by the
coarse quantization. It was shown in previous studies that an
extremely long training sequence is required to attain an
acceptable performance. In this paper, we study the problem of
optimal one-bit quantization design for channel estimation in
one-bit massive MIMO systems. Our analysis reveals that, if the
quantization thresholds are optimally devised, using one-bit ADCs
can achieve an estimation error close to (with an increase by a
factor of $\pi/2$) that of an ideal estimator which has access to
the unquantized data. The optimal quantization thresholds,
however, are dependent on the unknown channel parameters. To cope
with this difficulty, we propose an adaptive quantization (AQ)
approach in which the thresholds are adaptively adjusted in a way
such that the thresholds converge to the optimal thresholds, and a
random quantization (RQ) scheme which randomly generate a set of
nonidentical thresholds based on some statistical prior knowledge
of the channel. Simulation results show that, our proposed AQ and
RQ schemes, owing to their wisely devised thresholds, present a
significant performance improvement over the conventional fixed
quantization scheme that uses a fixed (typically zero) threshold,
and meanwhile achieve a substantial training overhead reduction
for channel estimation. In particular, even with a moderate number
of pilot symbols (about 5 times the number of users), the AQ
scheme can provide an achievable rate close to that of the perfect
channel state information (CSI) case.
\end{abstract}

\begin{keywords}
Massive MIMO systems, channel estimation, one-bit quantization
design, Cram\'{e}r-Rao bound (CRB), maximum likelihood (ML)
estimator.
\end{keywords}


\section{Introduction}
Massive multiple-input multiple-output (MIMO), also known as
large-scale or very-large MIMO, is a promising technology to meet
the ever growing demands for higher throughput and better
quality-of-service of next-generation wireless communication
systems \cite{RusekPersson13,LarssonEdfors14,ChenSun16}. Massive
MIMO systems are those that are equipped with a large number of
antennas at the base station (BS) simultaneously serving a much
smaller number of single-antenna users sharing the same
time-frequency slot. By exploiting the asymptotic orthogonality
among channel vectors associated with different users, massive
MIMO systems can achieve almost perfect inter-user interference
cancelation with a simple linear precoder and receive combiner
\cite{Marzetta10}, and thus have the potential to enhance the
spectrum efficiency by orders of magnitude.



Despite all these benefits, massive MIMO systems pose new
challenges for system design and hardware implementation. Due to
the large number of antennas at the BS, the hardware cost and
power consumption could become prohibitively high if we still
employ expensive and power-hungry high-resolution
analog-to-digital convertors (ADCs) \cite{ChenZhao14}. To address
this obstacle, recent studies (e.g.
\cite{RisiPersson14,FanJin15,ZhangDai16,JacobssonDurisi15,WangLi15,WenWang16,ChoiMo16})
considered the use of low-resolution ADCs (e.g. 1-3 bits) for
massive MIMO systems. It is known that the hardware complexity and
power consumption grow exponentially with the resolution (i.e. the
number of bits per sample) of the ADC. Therefore lowering the
resolution of the ADC can effectively reduce the hardware cost and
power consumption. In particular, for the extreme one-bit case,
the ADC becomes a simple analog comparator. Also, automatic gain
control (AGC) is no longer needed when one-bit ADCs are used,
which further simplifies the hardware complexity.



Massive MIMO with low-resolution ADCs has attracted much attention
over the past few years. Great efforts have been made to
understand the effects of low-resolution ADCs on the performance
of MIMO and massive MIMO systems. Specifically, by assuming full
knowledge of channel state information (CSI), the capacity at both
finite and infinite signal-to-noise ratio (SNR) was derived in
\cite{MoHeath15} for one-bit MIMO systems. For massive MIMO
systems with low-resolution ADCs, the spectral efficiency and the
uplink achievable rate were investigated in
\cite{RisiPersson14,FanJin15,ZhangDai16,LiangZhang16} under
different assumptions. The theoretical analyses suggest that the
use of the low cost and low-resolution ADCs can still provide
satisfactory achievable rates and spectral efficiency.







In this paper, we consider the problem of channel estimation for
uplink multiuser massive MIMO systems, where one-bit ADCs are used
at the BS in order to reduce the cost and power consumption.
Channel estimation is crucial to support multi-user MIMO operation
in massive MIMO systems
\cite{AdhikaryNam13,ChoiLove14,SunGao15,GaoDai15,FangLi17}. To
reach the full potential of massive MIMO, accurate downlink CSI is
required at the BS for precoding and other operations. Most
literature on massive MIMO systems, e.g.
\cite{Marzetta10,RusekPersson13,YinGesbert13,MullerCottatellucci14},
assumes a time division duplex (TDD) mode in which the downlink
CSI can be immediately obtained from the uplink CSI by exploiting
channel reciprocity. Nevertheless, channel estimation for massive
MIMO systems with one-bit ADCs is challenging since the magnitude
and phase information about the received signal are lost or
severely distorted due to the coarse quantization. It was shown in
\cite{RisiPersson14} that one-bit massive MIMO systems require an
excessively long training sequence (e.g. approximately 50 times
the number of users) to achieve an acceptable performance. The
work \cite{JacobssonDurisi15} showed that for one-bit massive MIMO
systems, a least-squares channel estimation scheme and a
maximum-ratio combining scheme are sufficient to support both
multiuser operation and the use of high-order constellations.
Nevertheless, a long training sequence is still a requirement. To
alleviate this issue, a Bayes-optimal joint channel and data
estimation scheme was proposed in \cite{WenWang16}, in which the
estimated payload data are utilized to aid channel estimation. In
\cite{ChoiMo16}, a maximum likelihood channel estimator, along
with a near maximum likelihood detector, were proposed for uplink
massive MIMO systems with one-bit ADCs.


Despite these efforts, channel estimation using one-bit quantized
data still incur much larger estimation errors as compared with
using the original unquantized data, and require considerably
higher training overhead to attain an acceptable estimation
accuracy. To address this issue, in this paper, we study one-bit
quantizer design and examine the impact of the choice of
quantization thresholds on the estimation performance.
Specifically, the optimal design of quantization thresholds as
well as the training sequences is investigated. Note that one-bit
quantization design is an interesting and important issue but
largely neglected by existing massive MIMO channel estimation
studies. In fact, most channel estimation schemes, e.g.
\cite{RisiPersson14,JacobssonDurisi15,WenWang16,ChoiMo16}, assume
a fixed, typically zero, quantization threshold. The optimal
choice of the quantization threshold was considered in
\cite{KochLapidoth13,Verdu02}, but addressed from an
information-theoretic perspective. Our theoretical results reveal
that, given that the quantization thresholds are optimally
devised, using one-bit ADCs can achieve an estimation error close
to (with an increase only by a factor of $\pi/2$) the minimum
achievable estimation error attained by using infinite-precision
ADCs. The optimal quantization thresholds, however, are dependent
on the unknown channel parameters. To cope with this difficulty,
we propose an adaptive quantization (AQ) scheme by which the
thresholds are dynamically adjusted in a way such that the
thresholds converge to the optimal thresholds, and a random
quantization (RQ) scheme which randomly generates a set of
non-identical thresholds based on some statistical prior knowledge
of the channel. Simulation results show that our proposed schemes,
because of their wisely devised quantization thresholds, present a
significant performance improvement over the fixed quantization
scheme that use a fixed (say, zero) quantization threshold. In
particular, the AQ scheme, even with a moderate number of pilot
symbols (about 5 times the number of users), can provide an
achievable rate close to that of the perfect CSI case.


The rest of the paper is organized as follows. The system model
and the problem of channel estimation using one-bit ADCs are
discussed in Section \ref{sec:system-model}. In Section
\ref{sec:MLE-CRB}, we develop a maximum likelihood estimator and
carry out a Cram\'{e}r-Rao bound analysis of the one-bit channel
estimation problem. The optimal design of quantization thresholds
and the pilot sequences is studied in Section
\ref{sec:optimal-design}. In Section \ref{sec:AQ-RQ}, we develop
an adaptive quantization scheme and a random quantization scheme
for practical threshold design. Simulation results are provided in
Section \ref{sec:experiments}, followed by concluding remarks in
Section \ref{sec:conclusion}.




\section{System Model and Problem Formulation} \label{sec:system-model}
Consider a single-cell uplink multiuser massive MIMO system, where
the BS equipped with $M$ antennas serves $K$ ($M\gg K$)
single-antenna users simultaneously. The channel is assumed to be
flat block fading, i.e. the channel remains constant over a
certain amount of coherence time. The received signal at the BS
can be expressed as
\begin{align}
\boldsymbol{Y}=\boldsymbol{H}\boldsymbol{X}+\boldsymbol{W}
\label{data-model}
\end{align}
where $\boldsymbol{X}\in\mathbb{C}^{K\times L}$ is a training
matrix and its row corresponds to each user's training sequence
with $L$ pilot symbols, $\boldsymbol{H}\in\mathbb{C}^{M\times K}$
denotes the channel matrix to be estimated, and
$\boldsymbol{W}\in\mathbb{C}^{M\times L}$ represents the additive
white Gaussian noise with its entries following a circularly
symmetric complex Gaussian distribution with zero mean and
variance $2\sigma^2$.

To reduce the hardware cost and power consumption, we consider a
massive MIMO system which uses one-bit ADCs at the BS to quantize
the received signal. Specifically, at each antenna, the real and
imaginary components of the received signal are quantized
separately using a pair of one-bit ADCs. Thus in total $2M$
one-bit ADCs are needed. The quantized output of the received
signal, $\boldsymbol{B}\triangleq [b_{m,l}]$, can be written as
\begin{align}
\boldsymbol{B}=\mathcal{Q}(\boldsymbol{Y})
\label{conventional-quantizer}
\end{align}
where $\mathcal{Q}(\boldsymbol{Y})$ is an element-wise operation
performed on $\boldsymbol{Y}$, and for each element of
$\boldsymbol{Y}$, $y_{m,l}$, we have
\begin{align}
\mathcal{Q}(y_{m,l})=\text{sgn}(\Re(y_{m,l}))+j\text{sgn}(\Im(y_{m,l}))
\end{align}
in which $\Re(y)$ and $\Im(y)$ denote the real and imaginary
components of $y$, respectively, and the sign function
$\text{sgn}(\cdot)$ is defined as
\begin{align}
\text{sgn}(y) \triangleq \left\{ \begin{array}{ll}
1 & \textrm{if $y\ge 0$}\\
-1 & \textrm{otherwise}
\end{array} \right.
\end{align}
Therefore the quantized output belongs to the set
\begin{align}
b_{m,l}\in \{1+j,-1+j,1-j,-1-j\}\quad \forall m,l
\end{align}
Note that in (\ref{conventional-quantizer}), we implicitly assume
a zero threshold for one-bit quantization. Nevertheless, using
identically a zero threshold for all measurements is not
necessarily optimal, and it is interesting to analyze the impact
of the quantization thresholds on the channel estimation
performance. Such an issue (i.e. choice of quantization
thresholds), albeit important, was to some extent neglected by
most existing studies. To examine this problem, let
$\boldsymbol{T}\triangleq [\tau_{m,l}]$ denote the thresholds used
for one-bit quantization. The quantized output of the received
signal, $\boldsymbol{B}$, is now given as
\begin{align}
\boldsymbol{B}=\mathcal{Q}(\boldsymbol{Y}-\boldsymbol{T})
\label{quantizer}
\end{align}

To facilitate our analysis, we first convert (\ref{data-model})
into a real-valued form as follows
\begin{align}
\boldsymbol{\widetilde{Y}}=\boldsymbol{\widetilde{A}}\boldsymbol{\widetilde{H}}+\boldsymbol{\widetilde{W}}
\end{align}
where
\begin{align}
\boldsymbol{\widetilde{Y}}\triangleq & [ \Re(\boldsymbol{Y}) \
\Im(\boldsymbol{Y})]^T \nonumber\\
\boldsymbol{\widetilde{H}} \triangleq & [ \Re(\boldsymbol{H}) \
\Im(\boldsymbol{H})]^T \nonumber\\
\boldsymbol{\widetilde{W}} \triangleq & [ \Re(\boldsymbol{W}) \
\Im(\boldsymbol{W})]^T \nonumber
\end{align}
and
\begin{align}
\boldsymbol{\widetilde{A}} \triangleq \left[\begin{array}{ccc}
\Re(\boldsymbol{X}) & \Im(\boldsymbol{X}) \\
-\Im(\boldsymbol{X}) & \Re(\boldsymbol{X})
\end{array}\right]^T \label{A-X-relationship}
\end{align}
Vectorizing the real-valued matrix $\boldsymbol{\widetilde{Y}}$,
the received signal can be expressed as a real-valued vector form
as
\begin{align}
\boldsymbol{y}=\boldsymbol{A}\boldsymbol{h}+\boldsymbol{w}
\label{data-model-vector}
\end{align}
where
$\boldsymbol{y}\triangleq\text{vec}(\boldsymbol{\widetilde{Y}})$,
$\boldsymbol{A}\triangleq\boldsymbol{I}_{M}\otimes\boldsymbol{\widetilde{A}}$,
$\boldsymbol{h}\triangleq\text{vec}(\boldsymbol{\widetilde{H}})$,
and
$\boldsymbol{w}\triangleq\text{vec}(\boldsymbol{\widetilde{W}})$.
It can be easily verified $\boldsymbol{y}\in\mathbb{R}^{2ML}$,
$\boldsymbol{A}\in\mathbb{R}^{2ML\times 2MK}$, and
$\boldsymbol{h}\in\mathbb{R}^{2MK}$. Accordingly, the one-bit
quantized data can be written as
\begin{align}
\boldsymbol{b}=\text{sgn}(\boldsymbol{y}-\boldsymbol{\tau})
\label{quantized-data-model-vector}
\end{align}
where $\boldsymbol{\tau}\triangleq \text{vec}([
\Re(\widetilde{\boldsymbol{T}}) \
\Im(\widetilde{\boldsymbol{T}})]^T)$ and
$\boldsymbol{\tau}\in\mathbb{R}^{2ML}$. For simplicity, we define
$N\triangleq 2ML$. 

Our objective in this paper is to estimate the channel
$\boldsymbol{h}$ based on the one-bit quantized data
$\boldsymbol{b}$, examine the best achievable estimation
performance and investigate the optimal thresholds
$\boldsymbol{\tau}$ as well as the optimal training sequences
$\boldsymbol{X}$. To this objective, in the following, we first
develop a maximum likelihood (ML) estimator and carry out a
Cram\'{e}r-Rao bound (CRB) analysis. The optimal choice of the
quantization thresholds as well as the training sequences is then
studied based on the CRB matrix of the unknown channel parameter
vector $\boldsymbol{h}$.

\section{ML Estimator and CRB Analysis} \label{sec:MLE-CRB}
\subsection{ML Estimator}
By combining (\ref{data-model-vector}) and
(\ref{quantized-data-model-vector}), we have
\begin{align}
b_n=\text{sgn}(y_n-\tau_n)
=\text{sgn}(\boldsymbol{a}_n^{T}\boldsymbol{h}+w_n-\tau_n), \quad
\forall n
\end{align}
where, by allowing a slight abuse of notation, we let $b_n$,
$y_n$, $\tau_n$, and $w_n$ denote the $n$th entry of
$\boldsymbol{b}$, $\boldsymbol{y}$, $\boldsymbol{\tau}$, and
$\boldsymbol{w}$, respectively; and $\boldsymbol{a}_n^T$ denotes
the $n$th row of $\boldsymbol{A}$. It is easy to derive that
\begin{align}
P(b_n=1;\boldsymbol{h}) & =P(w_n\geq-(\boldsymbol{a}_n^{T}\boldsymbol{h}-\tau_n);\boldsymbol{h}) \nonumber \\
& =F_{w}(\boldsymbol{a}_n^{T}\boldsymbol{h}-\tau_n)
\end{align}
and
\begin{align}
P(b_n=-1;\boldsymbol{h}) & =P(w_n < -(\boldsymbol{a}_n^{T}\boldsymbol{h}-\tau_n);\boldsymbol{h}) \nonumber \\
& =1-F_{w}(\boldsymbol{a}_n^{T}\boldsymbol{h}-\tau_n)
\end{align}
where $F_{w}(\cdot)$ denotes the cumulative density function (CDF)
of $w_n$, and $w_n$ is a real-valued Gaussian random variable with
zero-mean and variance $\sigma^2$. Therefore the probability mass
function (PMF) of $b_n$ is given by
\begin{align}
p(b_n;\boldsymbol{h})=& [1-F_{w}(\boldsymbol{a}_n^{T}\boldsymbol{h}-\tau_n)]^{(1-b_n)/2} \nonumber \\
&\cdot[F_{w}(\boldsymbol{a}_n^{T}\boldsymbol{h}-\tau_n)]^{(1+b_n)/2}
\end{align}
Since $\{b_n\}$ are independent, the log-PMF or log-likelihood
function can be written as
\begin{align}
L(\boldsymbol{h}) & \triangleq  \log p(b_1,\dots,b_N;\boldsymbol{h}) \nonumber \\
& = \sum_{n=1}^{N} \bigg\{ \frac{1-b_n}{2}\log [1-F_{w}(\boldsymbol{a}_n^{T}\boldsymbol{h}-\tau_n)]   \nonumber \\
& \qquad \quad + \frac{1+b_n}{2} \log
[F_{w}(\boldsymbol{a}_n^{T}\boldsymbol{h}-\tau_n)] \bigg\}
\label{log-PMF}
\end{align}
The ML estimate of $\boldsymbol{h}$, therefore, is given as
\begin{align}
\hat{\boldsymbol{h}}=\arg\max_{\boldsymbol{h}} \ L(\boldsymbol{h})
\label{MLE}
\end{align}
It can be proved that the log-likelihood function
$L(\boldsymbol{h})$ is a concave function. Hence computationally
efficient search algorithms can be used to find the global
maximum. The proof of the concavity of $L(\boldsymbol{h})$ is
given in Appendix \ref{appA}.




\subsection{CRB}
We now carry out a CRB analysis of the one-bit channel estimation
problem (\ref{quantized-data-model-vector}). The CRB results help
understand the effect of different system parameters, including
quantization thresholds as well as training sequences, on the
estimation performance. We first summarize our derived CRB results
in the following theorem.

\newtheorem{theorem}{Theorem}
\begin{theorem} \label{theorem1}
The Fisher information matrix (FIM) for the estimation problem
(\ref{quantized-data-model-vector}) is given as
\begin{align}
\boldsymbol{J}(\boldsymbol{h})=\sum_{n=1}^{N}
g(\tau_n,\boldsymbol{a}_n)\boldsymbol{a}_n\boldsymbol{a}_n^T
\end{align}
where $g(\tau_n,\boldsymbol{a}_n)$ is defined as
\begin{align}
g(\tau_n,\boldsymbol{a}_n) \triangleq \frac {f_{w}^2
(\boldsymbol{a}_n^{T}\boldsymbol{h}-\tau_n)}
{F_{w}(\boldsymbol{a}_n^{T}\boldsymbol{h}-\tau_n)(1-F_{w}(\boldsymbol{a}_n^{T}\boldsymbol{h}-\tau_n))}
\label{g-function}
\end{align}
in which $f_{w}(\cdot)$ denotes the probability density function
(PDF) of $w_n$. Accordingly, the CRB matrix for the estimation
problem (\ref{quantized-data-model-vector}) is given by
\begin{align}
\text{CRB}(\boldsymbol{h})=\boldsymbol{J}^{-1}(\boldsymbol{h}) =
\left( \sum_{n=1}^{N} g(\tau_n,\boldsymbol{a}_n) \boldsymbol{a}_n
\boldsymbol{a}_n^T \right)^{-1} \label{CRB}
\end{align}
\end{theorem}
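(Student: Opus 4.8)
My plan is to obtain the FIM directly from the definition $\boldsymbol{J}(\boldsymbol{h}) = E[\nabla_{\boldsymbol{h}} L(\boldsymbol{h})\,\nabla_{\boldsymbol{h}} L(\boldsymbol{h})^T]$. Because the quantized samples $\{b_n\}$ are independent, the score decomposes as $\nabla_{\boldsymbol{h}} L(\boldsymbol{h}) = \sum_{n=1}^N \nabla_{\boldsymbol{h}}\ell_n(\boldsymbol{h})$ into a sum of zero-mean, mutually uncorrelated per-sample scores, where $\ell_n$ denotes the $n$th summand of the log-PMF; hence $\boldsymbol{J}(\boldsymbol{h}) = \sum_{n=1}^N E[\nabla_{\boldsymbol{h}}\ell_n\,\nabla_{\boldsymbol{h}}\ell_n^T]$, and it suffices to evaluate a single per-sample term.

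To that end I would introduce the shorthand $u_n \triangleq \boldsymbol{a}_n^T\boldsymbol{h} - \tau_n$ and $p_n \triangleq F_{w}(u_n) = P(b_n=1;\boldsymbol{h})$, so that $\partial p_n/\partial\boldsymbol{h} = f_{w}(u_n)\boldsymbol{a}_n$ by the chain rule together with $F_{w}' = f_{w}$. Differentiating $\ell_n(\boldsymbol{h}) = \tfrac{1+b_n}{2}\log p_n + \tfrac{1-b_n}{2}\log(1-p_n)$ then gives
\begin{align}
\nabla_{\boldsymbol{h}}\ell_n = \Big(\frac{1+b_n}{2p_n} - \frac{1-b_n}{2(1-p_n)}\Big)f_{w}(u_n)\,\boldsymbol{a}_n .
\end{align}
Next I would compute $E[\nabla_{\boldsymbol{h}}\ell_n\,\nabla_{\boldsymbol{h}}\ell_n^T] = c_n\,f_{w}^2(u_n)\,\boldsymbol{a}_n\boldsymbol{a}_n^T$, where $c_n$ is the second moment of the scalar prefactor. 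Since $b_n\in\{+1,-1\}$, the quantities $\tfrac{1+b_n}{2}$ and $\tfrac{1-b_n}{2}$ are complementary $\{0,1\}$-indicators: the prefactor equals $1/p_n$ with probability $p_n$ and $-1/(1-p_n)$ with probability $1-p_n$, so $c_n = p_n p_n^{-2} + (1-p_n)(1-p_n)^{-2} = 1/[p_n(1-p_n)]$. Substituting back,
\begin{align}
E[\nabla_{\boldsymbol{h}}\ell_n\,\nabla_{\boldsymbol{h}}\ell_n^T] = \frac{f_{w}^2(u_n)}{F_{w}(u_n)(1-F_{w}(u_n))}\,\boldsymbol{a}_n\boldsymbol{a}_n^T = g(\tau_n,\boldsymbol{a}_n)\,\boldsymbol{a}_n\boldsymbol{a}_n^T ,
\end{align}
and summing over $n$ yields the stated FIM. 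The CRB matrix then follows from the matrix Cram\'{e}r--Rao inequality $\text{CRB}(\boldsymbol{h}) = \boldsymbol{J}^{-1}(\boldsymbol{h})$, which applies because the model is regular: the Gaussian density $f_{w}$ is everywhere positive, so $p_n\in(0,1)$ and there are no support/boundary complications, and differentiation under the expectation is justified for this finite discrete family.

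There is no genuinely hard step; this is a standard Fisher-information computation for independent categorical observations. The only points needing care are the bookkeeping of the chain rule through $F_{w}$ and the observation that the complementary-indicator structure of $(1\pm b_n)/2$ collapses the second moment of the score prefactor to $1/[p_n(1-p_n)]$. As a cross-check I would also recompute $\boldsymbol{J}(\boldsymbol{h}) = -E[\nabla^2_{\boldsymbol{h}} L(\boldsymbol{h})]$ via the Hessian; using $E[b_n] = 2p_n - 1$ the term involving $f_{w}'(u_n)$ cancels in expectation and one recovers the same $g(\tau_n,\boldsymbol{a}_n)$, confirming that the interchange of expectation and differentiation is valid.
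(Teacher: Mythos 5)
Your proposal is correct: the per-sample score, its zero mean, and the second moment $c_n = p_n^{-1} + (1-p_n)^{-1} = 1/[p_n(1-p_n)]$ are all computed correctly, and summing the per-sample outer products gives exactly $\boldsymbol{J}(\boldsymbol{h})=\sum_n g(\tau_n,\boldsymbol{a}_n)\boldsymbol{a}_n\boldsymbol{a}_n^T$, from which the CRB follows by inversion. Your route differs from the paper's in which form of the FIM definition is evaluated: the paper works with $\boldsymbol{J}(\boldsymbol{h})=-E\bigl[\partial^2 L/\partial\boldsymbol{h}\partial\boldsymbol{h}^T\bigr]$, explicitly computing the second derivative of the per-sample term (which brings in $f_w'$) and then taking the expectation over the Bernoulli variable $b_n$, at which point the $f_w'$ contributions cancel because $E[b_n]=2F_w(u_n)-1$; you instead use the score-covariance form $E[\nabla L\,\nabla L^T]$, which needs only first derivatives and the complementary-indicator structure of $(1\pm b_n)/2$, so no cancellation argument is required. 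The two are equivalent under the regularity you correctly note (Gaussian $F_w$ strictly between $0$ and $1$, finite discrete outcome space), and your cross-check via the Hessian is in fact a sketch of the paper's own derivation. What your version buys is a shorter computation without second derivatives; what the paper's version buys is that the Hessian it computes is the same object used to establish concavity of $L(\boldsymbol{h})$ in its Appendix A, so the two derivations share intermediate quantities. One small bookkeeping point: for the cross terms in $E[\nabla L\,\nabla L^T]$ to vanish you need both independence of the $b_n$ and the zero-mean property of each per-sample score; you state both, so the argument is complete, but make the dependence on the zero-mean fact explicit rather than attributing it to independence alone.
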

\begin{proof}
See Appendix \ref{appB}.
\end{proof}

As is well known, the CRB places a lower bound on the estimation
error of any unbiased estimator \cite{Kay93} and is asymptotically
attained by the ML estimator. Specifically, the covariance matrix
of any unbiased estimate satisfies:
$\text{cov}(\hat{\boldsymbol{h}})-\text{CRB}(\boldsymbol{h})
\succeq \boldsymbol{0}$. Also, the variance of each component is
bounded by the corresponding diagonal element of
$\text{CRB}(\boldsymbol{h})$, i.e., $\text{var}(\hat{h}_i) \ge
[\text{CRB}(\boldsymbol{h})]_{ii}$.

We observe from (\ref{CRB}) that the CRB matrix of
$\boldsymbol{h}$ depends on the quantization thresholds
$\boldsymbol{\tau}$ as well as the matrix $\boldsymbol{A}$ which
is constructed from training sequences $\boldsymbol{X}$ (cf.
(\ref{A-X-relationship})). Naturally, we wish to optimize
$\boldsymbol{\tau}$ and $\boldsymbol{A}$ (i.e. $\boldsymbol{X}$)
by minimizing the trace of the CRB matrix, i.e. the overall
estimation error asymptotically achieved by the ML estimator. The
optimization therefore can be formulated as follows
\begin{align}
\min_{\boldsymbol{X},\boldsymbol{\tau}}\quad &
\text{tr}\left\{\text{CRB}(\boldsymbol{h})\right\} = \mathrm{tr}
\left\{ \left( \sum_{n=1}^{N} g(\tau_n,\boldsymbol{a}_n)
\boldsymbol{a}_n \boldsymbol{a}_n^T \right)^{-1} \right\}
\nonumber\\
\text{s.t.} \quad &
\boldsymbol{A}=\boldsymbol{I}_M\otimes\boldsymbol{\widetilde{A}}
\nonumber\\
& \boldsymbol{\widetilde{A}} \triangleq \left[\begin{array}{ccc}
\Re(\boldsymbol{X}) & \Im(\boldsymbol{X}) \\
-\Im(\boldsymbol{X}) & \Re(\boldsymbol{X})
\end{array}\right]^T \nonumber\\
& \text{tr}(\boldsymbol{X}\boldsymbol{X}^H)\leq P
 \label{opt1}
\end{align}
where $\text{tr}(\boldsymbol{X}\boldsymbol{X}^H)\leq P$ is a
transmit power constraint imposed on the pilot signals. Such an
optimization is examined in the following section, where it is
shown that the optimization of $\boldsymbol{X}$ can be decoupled
from the optimization of the threshold $\boldsymbol{\tau}$.

\begin{figure}[!t]
\centering
\includegraphics[width=3.5in]{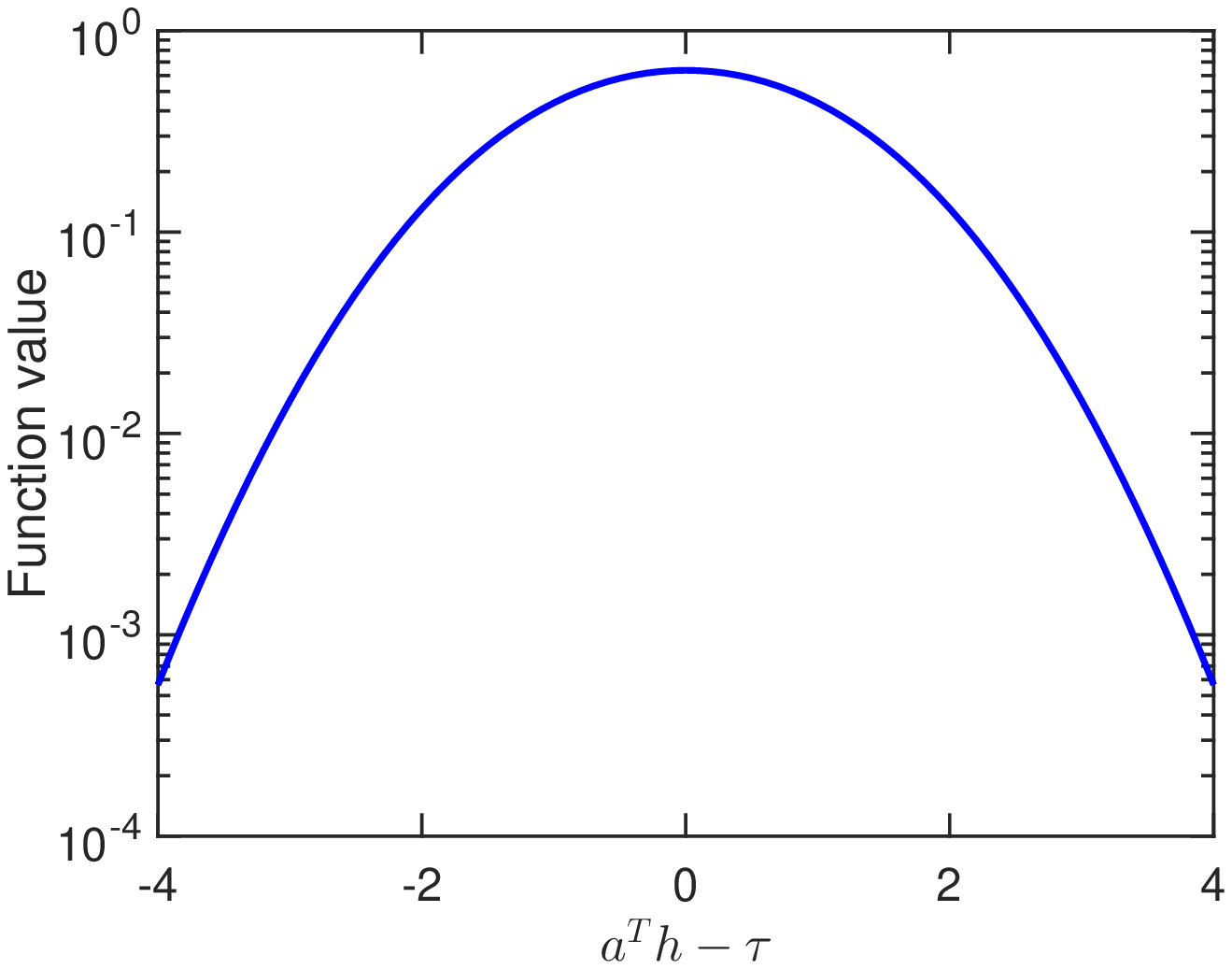}
\caption{The function value of $g(\tau_n,\boldsymbol{a}_n)$ vs.
$(\boldsymbol{a}_n^T\boldsymbol{h}-\tau_n)$, where $\sigma^2=1$.}
\label{fig1}
\end{figure}


\section{Optimal Design and Performance Analysis} \label{sec:optimal-design}
\subsection{Optimal Quantization Thresholds and Pilot
Sequences} Before proceeding, we first introduce the following
result.

\newtheorem{proposition}{Proposition}
\begin{proposition} \label{proposition1}
For the Gaussian random variable $w_n$,
$g(\tau_n,\boldsymbol{a}_n)$ defined in (\ref{g-function}) is a
positive and symmetric function attaining its maximum when
$\tau_n=\boldsymbol{a}_n^{T}\boldsymbol{h}$ (see Fig. \ref{fig1}).
\end{proposition}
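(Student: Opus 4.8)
The plan is to reduce everything to a one‑dimensional statement in the ``effective argument'' $t \triangleq \boldsymbol{a}_n^{T}\boldsymbol{h}-\tau_n$, on which $g(\tau_n,\boldsymbol{a}_n)$ depends only. Writing the zero‑mean, variance‑$\sigma^2$ Gaussian density and CDF as $f_{w}(t)=\sigma^{-1}\phi(t/\sigma)$ and $F_{w}(t)=\Phi(t/\sigma)$, with $\phi$ and $\Phi$ the standard normal PDF and CDF, and substituting $u=t/\sigma$, the function takes the form $g=\sigma^{-2}\phi^{2}(u)/[\Phi(u)(1-\Phi(u))]$. Positivity is then immediate, since $\phi^{2}(u)>0$ and $0<\Phi(u)<1$ for every finite $u$, so the denominator is strictly positive. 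Symmetry follows from the elementary identities $\phi(-u)=\phi(u)$ and $\Phi(-u)=1-\Phi(u)$, which give $\Phi(-u)(1-\Phi(-u))=\Phi(u)(1-\Phi(u))$ and hence $g(-t)=g(t)$; that is, $g$ is symmetric about the point $\tau_n=\boldsymbol{a}_n^{T}\boldsymbol{h}$.

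For the maximum I would first evaluate $g(0)=2/(\pi\sigma^{2})$ and, invoking the symmetry just established, reduce the claim to showing $g(t)\le g(0)$ for $t\ge 0$, equivalently $\psi(u)\triangleq\Phi(u)(1-\Phi(u))-\tfrac{\pi}{2}\phi^{2}(u)\ge 0$ for all $u\ge 0$, with equality at $u=0$. Since $\psi(0)=0$, it suffices to understand the sign of $\psi'$. Using $\phi'(u)=-u\phi(u)$ one obtains $\psi'(u)=\phi(u)\,q(u)$ with $q(u)\triangleq\pi u\phi(u)-(2\Phi(u)-1)$, so that the sign of $\psi'$ coincides with that of $q$.

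The remaining task is to analyze $q$, and I expect this to be the one genuinely nontrivial step. Here $q(0)=0$ and $q'(u)=\phi(u)\big[(\pi-2)-\pi u^{2}\big]$, which is positive on $[0,u_{\star})$ and negative on $(u_{\star},\infty)$ with $u_{\star}\triangleq\sqrt{1-2/\pi}$; hence $q$ increases from $0$ up to $u_{\star}$ and decreases thereafter, while $q(u)\to -1$ as $u\to\infty$. This forces a unique zero $u_{0}>u_{\star}$ of $q$, with $q>0$ on $(0,u_{0})$ and $q<0$ on $(u_{0},\infty)$. Translating back, $\psi$ is increasing on $(0,u_{0})$ and decreasing on $(u_{0},\infty)$; combined with the boundary values $\psi(0)=0$ and $\psi(u)\to 0$ as $u\to\infty$, this gives $\psi(u)\ge 0$ for all $u\ge 0$, with equality only at $u=0$. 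Therefore $g(t)\le g(0)$ for all $t$, the maximum being attained exactly when $\tau_n=\boldsymbol{a}_n^{T}\boldsymbol{h}$, which completes the proof. The positivity and symmetry parts are essentially one‑liners; the only real care is needed in the two‑level sign analysis — first of $q'$, then of $q$ — together with the observation that $\psi$ vanishes at both endpoints $u=0$ and $u\to\infty$.
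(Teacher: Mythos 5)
Your proof is correct, but it takes a genuinely different route from the paper's. After the same standardization to $\bar g(u)=\phi^2(u)/[\Phi(u)(1-\Phi(u))]$ (positivity and symmetry being immediate in both treatments), the paper does not do a derivative sign analysis at all: it proves a P\'olya-type Gaussian bound, $\int_0^x \phi(u)\,\mathrm{d}u \le \tfrac12\sqrt{1-e^{-2x^2/\pi}}$, via a neat measure-comparison argument for a bivariate standard normal over a square versus a quarter-disk of equal area, and then writes $\Phi(x)(1-\Phi(x))=\tfrac14-\bar F^2(x)$ to conclude $\bar g(x)\le \tfrac{2}{\pi}e^{-(1-2/\pi)x^2}\le \tfrac{2}{\pi}$ with equality iff $x=0$. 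Your argument instead shows $\psi(u)\triangleq\Phi(u)(1-\Phi(u))-\tfrac{\pi}{2}\phi^2(u)\ge 0$ by a two-level monotonicity analysis (sign of $q'$, then of $q$, then the endpoint values $\psi(0)=0$ and $\psi(u)\to 0$), which is elementary calculus, self-contained, and also yields uniqueness of the maximizer; the chain $\psi'=\phi\,q$, $q'=\phi[(\pi-2)-\pi u^2]$, and the limits check out. What the paper's route buys in exchange is an explicit quantitative envelope, $\bar g(x)\le \tfrac{2}{\pi}e^{-(1-2/\pi)x^2}$, i.e.\ an exponential decay rate of $g$ away from the optimum, which directly backs the later claim that the CRB deteriorates exponentially as the thresholds deviate from $\tau_n^{\star}=\boldsymbol{a}_n^{T}\boldsymbol{h}$; your qualitative argument establishes the maximality claim of the proposition but not that rate.
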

\begin{proof}
Please see Appendix \ref{appE}.
\end{proof}

Hence, given a fixed $\boldsymbol{A}$ (i.e. $\boldsymbol{X}$), the
optimal quantization thresholds conditional on $\boldsymbol{A}$
are given by
\begin{align}
\tau_n^{\star}=\boldsymbol{a}_n^{T}\boldsymbol{h}, \quad \forall
n\in\{1,\ldots,N\} \label{optimumthreshold}
\end{align}
The result (\ref{optimumthreshold}) comes directly by noting that
\begin{align}
\sum_{n=1}^N
g_n(\tau_n^{\star},\boldsymbol{a}_n)\boldsymbol{a}_n\boldsymbol{a}_n^T-\sum_{n=1}^N
g_n(\tau_n,\boldsymbol{a}_n)\boldsymbol{a}_n\boldsymbol{a}_n^T\succeq\mathbf{0}
\end{align}
and resorting to the convexity of $\text{tr}(\boldsymbol{P}^{-1})$
over the set of positive definite matrix, i.e. for any
$\boldsymbol{P}\succ\mathbf{0}$, $\boldsymbol{Q}\succ\mathbf{0}$,
and $\boldsymbol{P}-\boldsymbol{Q}\succeq \mathbf{0}$, the
following inequality
$\text{tr}(\boldsymbol{P}^{-1})\leq\text{tr}(\boldsymbol{Q}^{-1})$
holds (see \cite{BoydVandenberghe03}).

We see that the optimal choice of the quantization threshold
$\tau_n$ is dependent on the unknown channel $\boldsymbol{h}$. To
facilitate our analysis, we, for the time being, suppose
$\boldsymbol{h}$ is known. Substituting (\ref{optimumthreshold})
into (\ref{opt1}) and noting that
\begin{align}
g(\tau_n^{\star},\boldsymbol{a}_n)=\frac {f_{w}^2 (0)}
{F_{w}(0)(1-F_{w}(0))}  = \frac{2}{\pi\sigma^2} \quad \forall n
\end{align}
the optimization (\ref{opt1}) reduces to
\begin{align}
\min_{\boldsymbol{X}} \quad & \frac{\pi\sigma^2}{2} \text{tr}
\left\{ \left( \boldsymbol{A}^T \boldsymbol{A} \right)^{-1}
\right\}
\nonumber\\
\text{s.t.} \quad &
\boldsymbol{A}=\boldsymbol{I}_M\otimes\boldsymbol{\widetilde{A}}
\nonumber\\
& \boldsymbol{\widetilde{A}} \triangleq \left[\begin{array}{ccc}
\Re(\boldsymbol{X}) & \Im(\boldsymbol{X}) \\
-\Im(\boldsymbol{X}) & \Re(\boldsymbol{X})
\end{array}\right]^T \nonumber\\
& \text{tr}(\boldsymbol{X}\boldsymbol{X}^H)\leq P \label{opt2}
\end{align}
which is now independent of $\boldsymbol{h}$. We have the
following theorem regarding the solution to the optimization
(\ref{opt2}).


\begin{theorem} \label{theorem2}
The minimum achievable objective function value of (\ref{opt2}) is
given by $(\pi\sigma^2 MK^2)/P$ and can be attained if the pilot
matrix $\boldsymbol{X}$ satisfies
\begin{align}
\boldsymbol{X}\boldsymbol{X}^H = (P/K) \boldsymbol{I}
\label{theorem2:eqn1}
\end{align}
\end{theorem}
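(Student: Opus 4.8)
The plan is to turn the matrix-valued optimization (\ref{opt2}) into a scalar optimization over the eigenvalues of the $K\times K$ Gram matrix $\boldsymbol{X}\boldsymbol{X}^H$, and then to close it with a standard arithmetic--harmonic mean inequality.

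First I would strip off the Kronecker structure: since $\boldsymbol{A}=\boldsymbol{I}_M\otimes\boldsymbol{\widetilde A}$, we have $\boldsymbol{A}^T\boldsymbol{A}=\boldsymbol{I}_M\otimes(\boldsymbol{\widetilde A}^T\boldsymbol{\widetilde A})$, hence $\mathrm{tr}\{(\boldsymbol{A}^T\boldsymbol{A})^{-1}\}=M\,\mathrm{tr}\{(\boldsymbol{\widetilde A}^T\boldsymbol{\widetilde A})^{-1}\}$. Next comes the key step: expanding $\boldsymbol{\widetilde A}^T\boldsymbol{\widetilde A}$ block by block and writing $\boldsymbol{X}\boldsymbol{X}^H=\boldsymbol{P}_1+j\boldsymbol{P}_2$ with $\boldsymbol{P}_1$ symmetric and $\boldsymbol{P}_2$ skew-symmetric, one checks that $\boldsymbol{\widetilde A}^T\boldsymbol{\widetilde A}$ is exactly the $2K\times 2K$ real representation $\left[\begin{smallmatrix}\boldsymbol{P}_1 & \boldsymbol{P}_2\\ -\boldsymbol{P}_2 & \boldsymbol{P}_1\end{smallmatrix}\right]$ of $\overline{\boldsymbol{X}\boldsymbol{X}^H}$. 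Conjugating this by the unitary matrix $\tfrac{1}{\sqrt2}\left[\begin{smallmatrix}\boldsymbol{I} & \boldsymbol{I}\\ -j\boldsymbol{I} & j\boldsymbol{I}\end{smallmatrix}\right]$ block-diagonalizes it into $\mathrm{diag}(\overline{\boldsymbol{X}\boldsymbol{X}^H},\,\boldsymbol{X}\boldsymbol{X}^H)$, so its eigenvalues are precisely those of the Hermitian positive-definite matrix $\boldsymbol{X}\boldsymbol{X}^H$, each with multiplicity two (which also shows $\boldsymbol{\widetilde A}^T\boldsymbol{\widetilde A}$ is invertible iff $\boldsymbol{X}$ has full row rank, forcing $L\ge K$). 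Consequently $\mathrm{tr}\{(\boldsymbol{\widetilde A}^T\boldsymbol{\widetilde A})^{-1}\}=2\,\mathrm{tr}\{(\boldsymbol{X}\boldsymbol{X}^H)^{-1}\}$, and the objective of (\ref{opt2}) collapses to $\pi\sigma^2 M\,\mathrm{tr}\{(\boldsymbol{X}\boldsymbol{X}^H)^{-1}\}$, the only remaining constraint being $\mathrm{tr}(\boldsymbol{X}\boldsymbol{X}^H)\le P$.

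It then remains to minimize $\mathrm{tr}(\boldsymbol{R}^{-1})$ over $\boldsymbol{R}\triangleq\boldsymbol{X}\boldsymbol{X}^H\succ\mathbf{0}$ subject to $\mathrm{tr}(\boldsymbol{R})\le P$. Letting $\mu_1,\dots,\mu_K>0$ be the eigenvalues of $\boldsymbol{R}$, the arithmetic--harmonic mean inequality gives $\sum_k \mu_k^{-1}\ge K^2/\sum_k\mu_k\ge K^2/P$, with equality iff $\mu_1=\cdots=\mu_K=P/K$, i.e. $\boldsymbol{X}\boldsymbol{X}^H=(P/K)\boldsymbol{I}$ (which is feasible whenever $L\ge K$, e.g. by taking $\boldsymbol{X}$ with mutually orthogonal rows each of squared norm $P/K$). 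Substituting back yields the minimum value $\pi\sigma^2 M\cdot K^2/P=(\pi\sigma^2 MK^2)/P$, which is the claim.

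I expect the real-to-complex identification --- verifying that $\boldsymbol{\widetilde A}^T\boldsymbol{\widetilde A}$ carries the doubled spectrum of $\boldsymbol{X}\boldsymbol{X}^H$, so that passing to the trace of the inverse produces the factor $2$ --- to be the only non-routine part; once the objective is rewritten as $\pi\sigma^2 M\,\mathrm{tr}\{(\boldsymbol{X}\boldsymbol{X}^H)^{-1}\}$, the rest is a one-line mean-inequality (equivalently, convexity of $\mathrm{tr}(\cdot^{-1})$) argument. As an alternative to the explicit block computation, one may observe that $\boldsymbol{\widetilde A}$ is, up to reordering, the real matrix representation of $\boldsymbol{X}^T$ and use the fact that this representation preserves products to obtain $\boldsymbol{\widetilde A}^T\boldsymbol{\widetilde A}\leftrightarrow\overline{\boldsymbol{X}}\,\boldsymbol{X}^T=\overline{\boldsymbol{X}\boldsymbol{X}^H}$ directly.
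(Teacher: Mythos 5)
Your proof is correct, and it takes a route that differs from the paper's in a meaningful way. The paper does not attempt to rewrite the objective in terms of $\boldsymbol{X}\boldsymbol{X}^H$ at all: it relaxes the structured constraint set of (\ref{opt2}) to the single trace constraint $\mathrm{tr}(\boldsymbol{A}^T\boldsymbol{A})\le 2MP$, solves the relaxed problem with a KKT-based lemma (minimize $\mathrm{tr}(\boldsymbol{Z}^{-1})$ subject to $\mathrm{tr}(\boldsymbol{Z})\le P_0$, optimum $\boldsymbol{Z}=(P_0/p)\boldsymbol{I}$), and then verifies that the relaxed optimum $\boldsymbol{A}^T\boldsymbol{A}=(P/K)\boldsymbol{I}$ is actually realized by a feasible pilot matrix satisfying (\ref{theorem2:eqn1}), so the relaxation is tight. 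You instead perform an exact structural reduction: the Kronecker identity $\mathrm{tr}\{(\boldsymbol{A}^T\boldsymbol{A})^{-1}\}=M\,\mathrm{tr}\{(\boldsymbol{\widetilde A}^T\boldsymbol{\widetilde A})^{-1}\}$ followed by the (correct) observation that $\boldsymbol{\widetilde A}^T\boldsymbol{\widetilde A}$ is the real representation of $\overline{\boldsymbol{X}\boldsymbol{X}^H}$ and hence carries the spectrum of $\boldsymbol{X}\boldsymbol{X}^H$ with multiplicity two, collapsing the problem to minimizing $\mathrm{tr}\{(\boldsymbol{X}\boldsymbol{X}^H)^{-1}\}$ under $\mathrm{tr}(\boldsymbol{X}\boldsymbol{X}^H)\le P$, which you close with the arithmetic--harmonic mean inequality (the scalar counterpart of the paper's Lemma~1). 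Your approach buys more: it shows the objective depends on $\boldsymbol{X}$ only through its Gram matrix, makes the requirement $L\ge K$ explicit, and via the equality condition in AM--HM it actually establishes that (\ref{theorem2:eqn1}) is necessary as well as sufficient, whereas the theorem and the paper's relax-and-verify argument only assert sufficiency. The paper's route, in turn, avoids the block/real-representation computation entirely and isolates a reusable convex-optimization lemma. Both yield the same minimum value $(\pi\sigma^2 MK^2)/P$, and your accounting of the constants (the factor $M$ from the Kronecker product, the factor $2$ from the doubled spectrum, and the $\pi\sigma^2/2$ prefactor) is consistent with the paper's.
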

\begin{proof}
See Appendix \ref{appC}.
\end{proof}

Theorem \ref{theorem2} reveals that, for one-bit massive MIMO
systems, users should employ orthogonal pilot sequences in order
to minimize channel estimation errors. Although it is a convention
to use orthogonal pilots to facilitate channel estimation for
conventional massive MIMO systems, to our best knowledge, its
optimality in one-bit massive MIMO systems has not been
established before.

\subsection{Performance Analysis}
We now investigate the estimation performance when the optimal
thresholds are employed, and its comparison with the performance
attained by a conventional massive MIMO system which assumes
infinite-precision ADCs. Substituting the optimal thresholds
(\ref{optimumthreshold}) into the CRB matrix (\ref{CRB}), we have
\begin{align}
\text{CRB}_{\text{OQ}}(\boldsymbol{h})=\frac{\pi\sigma^2}{2}
\left( \boldsymbol{A}^T \boldsymbol{A} \right)^{-1} \label{CRB-Q}
\end{align}
where for clarity, we use the subscript OQ to represent the
estimation scheme using optimal quantization thresholds. On the
other hand, when the unquantized observations $\boldsymbol{y}$ are
available, it can be readily verified that the CRB matrix is given
as
\begin{align}
\text{CRB}_{\text{NQ}}(\boldsymbol{h})=\sigma^2 \left(
\boldsymbol{A}^T \boldsymbol{A} \right)^{-1} \label{CRB-NQ}
\end{align}
where we use the subscript NQ to represent the scheme which has
access to the unquantized observations. Comparing (\ref{CRB-Q})
with (\ref{CRB-NQ}), we can see that if optimal thresholds are
employed, then using one-bit ADCs for channel estimation incurs
only a mild performance loss relative to using infinite-precision
ADCs, with the CRB increasing by only a factor of $\pi/2$, i.e.
\begin{align}
\text{CRB}_{\text{OQ}}(\boldsymbol{h})=\frac{\pi}{2}
\text{CRB}_{\text{NQ}}(\boldsymbol{h})
\end{align}
We also take a glimpse of the estimation performance as the
thresholds deviate from their optimal values. For simplicity, let
$\tau_n=\tau_n^{\star}+\delta=\boldsymbol{a}_n^{T}\boldsymbol{h}+\delta,
\forall n$, in which case the CRB matrix is given by
\begin{align}
\text{CRB}_{\text{Q}}(\boldsymbol{h})=\frac
{F_{w}(\delta)(1-F_{w}(\delta))}{f_{w}^2 (\delta)}\left(
\boldsymbol{A}^T \boldsymbol{A} \right)^{-1}
\end{align}
Since $(F_{w}(\delta)(1-F_{w}(\delta)))/f_{w}^2(\delta)$ is the
reciprocal of $g(\tau_n,\boldsymbol{a}_n)$, from Fig. \ref{fig1},
we know that the function value
$(F_{w}(\delta)(1-F_{w}(\delta)))/f_{w}^2(\delta)$ grows
exponentially as $|\delta|$ increases. This indicates that a
deviation of the thresholds from their optimal values results in a
substantial performance loss.


In summary, the above results have important implications for the
design of one-bit massive MIMO systems. It points out that a
careful choice of quantization thresholds can help improve the
estimation performance significantly, and help achieve an
estimation accuracy close to an ideal estimator which has access
to the raw observations $\boldsymbol{y}$.


The problem lies in that the optimal thresholds
$\boldsymbol{\tau}$ are functions of $\boldsymbol{h}$, as
described in (\ref{optimumthreshold}). Since $\boldsymbol{h}$ is
unknown and to be estimated, the optimal thresholds
$\boldsymbol{\tau}$ are also unknown. To address this difficulty,
we, in the following, propose an adaptive quantization (AQ) scheme
by which the thresholds are dynamically adjusted from one
iteration to another, and a random quantization (RQ) schme which
randomly generates a set of nonidentical thresholds based on some
statistical prior knowledge of the channel.

\section{Practical Threshold Design Strategies} \label{sec:AQ-RQ}
\subsection{Adaptive Quantization}
One strategy to overcome the above difficulty is to use an
iterative algorithm in which the thresholds are iteratively
refined based on the previous estimate of $\boldsymbol{h}$.
Specifically, at iteration $i$, we use the current quantization
thresholds $\boldsymbol{\tau}^{(i)}$ to generate the one-bit
observation data $\boldsymbol{b}^{(i)}$. Then a new estimate
$\hat{\boldsymbol{h}}^{(i)}$ is obtained from the ML estimator
(\ref{MLE}). This estimate is then plugged in
(\ref{optimumthreshold}) to obtain updated quantization
thresholds, i.e. $\boldsymbol{\tau}^{(i+1)}=\boldsymbol{A}
\hat{\boldsymbol{h}}^{(i)}$, for subsequent iteration. When
computing the ML estimate $\hat{\boldsymbol{h}}^{(i)}$, not only
the quantized data from the current iteration but also from all
previous iterations can be used. The ML estimator (\ref{MLE}) can
be easily adapted to accommodate these quantized data since the
data are independent across different iterations. Due to the
consistency of the ML estimator for large data records, this
iterative process will asymptotically lead to optimal quantization
thresholds, i.e. $\boldsymbol{\tau}^{(i)} \stackrel{i \to
\infty}{\longrightarrow} \boldsymbol{A} \boldsymbol{h}$. In fact,
our simulation results show that the adaptive quantization scheme
yields quantization thresholds close to the optimal values within
only a few iterations.

For clarity, we summarize the adaptive quantization (AQ) scheme as
follows.

\begin{center}
\textbf{Adaptive Quantization Scheme}
\end{center}
\vspace{0cm} \noindent
\begin{tabular}{lp{7.7cm}}
\hline 1.& Select an initial quantization threshold
$\boldsymbol{\tau}^{(0)}$ and the maximum number of iterations $i_{\text{max}}$. \\
2.& At iteration $i=1,2,\ldots$: Based on $\boldsymbol{y}$ and
$\boldsymbol{\tau}^{(i)}$, calculate the new binary data
$\boldsymbol{b}^{(i)}=\text{sgn}(\boldsymbol{y}-\boldsymbol{\tau}^{(i)})$. \\
3.& Compute a new estimate of $\boldsymbol{h}$,
$\hat{\boldsymbol{h}}^{(i)}$,
via (\ref{MLE}). \\
4.& Calculate new thresholds according to $\boldsymbol{\tau}^{(i+1)}=\boldsymbol{A}\hat{\boldsymbol{h}}^{(i)}$. \\
5.& Go to Step 2 if $i < i_{\text{max}}$. \\
\hline
\end{tabular}


Note that during the iterative process, the channel
$\boldsymbol{h}$ is assumed constant over time. Thus the AQ scheme
can be used to estimate channels that are unchanged or slowly
time-varying across a number of consecutive frames. For example,
for the scenario where the relative speeds between the mobile
terminals and the base station are slow, say, 2 meters per second,
the channel coherence time could be up to tens of milliseconds,
more precisely, about 60 milliseconds if the carrier frequency is
set to 1GHz, according to the Clarke's model
\cite{TseViswanath05}. Suppose the time duration of each frame is
10 milliseconds which is a typical value for practical LTE
systems. In this case, the channel remains unchanged across 6
consecutive frames. We can use the AQ scheme to update the
quantization thresholds at each frame based on the channel
estimate obtained from the previous frame. In this way, we can
expect that the quantization thresholds will come closer and
closer to the optimal values from one frame to the next, and as a
result, a more and more accurate channel estimate can be obtained.

\begin{figure}[!t]
\centering
\includegraphics[width=3.5in]{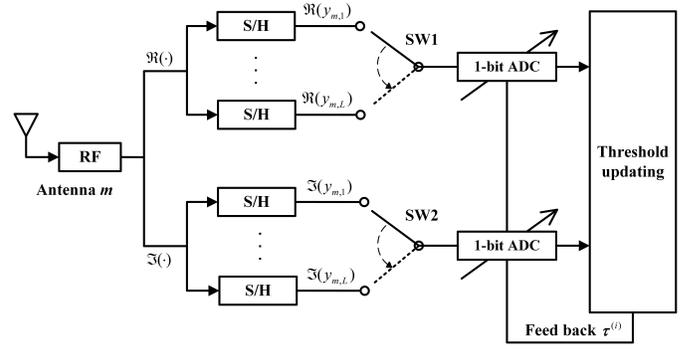}
\caption{An off-line implementation of the AQ scheme.}
\label{fig2}
\end{figure}

The above scheme assumes a static or slowly time-varying channel
across multiple frames. Another way of implementing the AQ scheme
requires no such an assumption, but at the expense of increased
hardware complexity. The idea is to use a number of
sample-and-hold (S/H) circuits to sample the analog received
signals and to store their values for subsequent offline
processing. Specifically, each antenna/RF chain is followed by
$2L$ S/H circuits which are equally divided into two groups to
sample and store the real and imaginary components, respectively
(see Fig. \ref{fig2}). Through a precise timing control, we ensure
that at each antenna, say, the $m$th antenna, the $l$th S/H
circuit pair in the two groups are controlled to store the real
and imaginary components of the $l$th received pilot symbol, i.e.
$\Re(y_{m,l})$ and $\Im(y_{m,l})$, respectively. Also, to avoid
using a one-bit ADC for each S/H circuit, a switch can be used to
connect a single one-bit ADC with multiple S/H circuits. Once the
analog signals $\boldsymbol{y}$ have been stored, the AQ scheme
can be implemented in an offline manner. Clearly, this offline
approach can be implemented on a single frame basis, and thus no
longer requires a static channel assumption. Nevertheless, such an
implementation requires a number of S/H circuits as well as
precise timing control for sampling and quantization. Also, this
offline processing may cause a latency issue which should be taken
care of in practical systems.






\subsection{Random Quantization}
The AQ scheme requires the channel to be (approximately)
stationary, or needs to be implemented with additional hardware
circuits. Here we propose a random quantization (RQ) scheme that
does not involve any iterative procedure and is simple to
implement. The idea is to randomly generate a set of non-identical
thresholds based on some statistical prior knowledge of
$\boldsymbol{h}$, with the hope that some of the thresholds are
close to the unknown optimal thresholds. For example, suppose each
entry of $\boldsymbol{h}$ follows a Gaussian distribution with
zero mean and variance $\sigma_h^2$. Note that different entries
of $\boldsymbol{h}$ may have different variances due to the reason
that they may correspond to different users. Nevertheless, we
assume the same variance for all entries for simplicity. We
randomly generate $N$ different realizations of $\boldsymbol{h}$,
denoted as $\{\boldsymbol{\tilde{h}}_n\}$, following this known
distribution. The $N$ quantization thresholds are then devised
according to
\begin{align}
\tau_n=\boldsymbol{a}_n^{T}\boldsymbol{\tilde{h}}_n, \quad \forall
n\in\{1,\ldots,N\} \label{multi-thresholding}
\end{align}
Our simulation results suggest that this RQ scheme can achieve a
considerable performance improvement over the conventional fixed
quantization scheme which uses a fixed (typically zero) threshold.
The reason is that the thresholds produced by
(\ref{multi-thresholding}) are more likely to be close to their
optimal values.




\section{Simulation Results} \label{sec:experiments}
We now carry out experiments to corroborate our theoretical
analysis and to illustrate the performance of our proposed one-bit
quantization schemes, i.e. the AQ and the RQ schemes. We compare
our schemes with the conventional fixed quantization (FQ) scheme
which employs a fixed zero threshold for one-bit quantization, and
a no-quantization scheme (referred to as NQ) which uses the
original unquantized data for channel estimation. For the NQ
scheme, it can be easily verified that its ML estimate is given by
\begin{align}
\boldsymbol{\hat{h}}=(\boldsymbol{A}^T\boldsymbol{A})^{-1}\boldsymbol{A}^T\boldsymbol{y}
\end{align}
and its associated CRB is given by (\ref{CRB-NQ}). For other
schemes such as the RQ and the FQ, although a close-form
expression is not available, the ML estimate can be obtained by
solving the convex optimization (\ref{MLE}). In our simulations,
we assume independent and identically distributed (i.i.d.)
rayleigh fading channels, i.e. all elements of $\boldsymbol{H}$
follow a circularly symmetric complex Gaussian distribution with
zero mean and unit variance. Training sequences $\boldsymbol{X}$
which satisfy (\ref{theorem2:eqn1}) are randomly generated. The
signal-to-noise ratio (SNR) is defined as
\begin{align}
\text{SNR}=\frac{P}{KL\sigma^2}
\end{align}

\begin{figure}[!t]
 \centering
\begin{tabular}{c}
\includegraphics[width=3.5in]{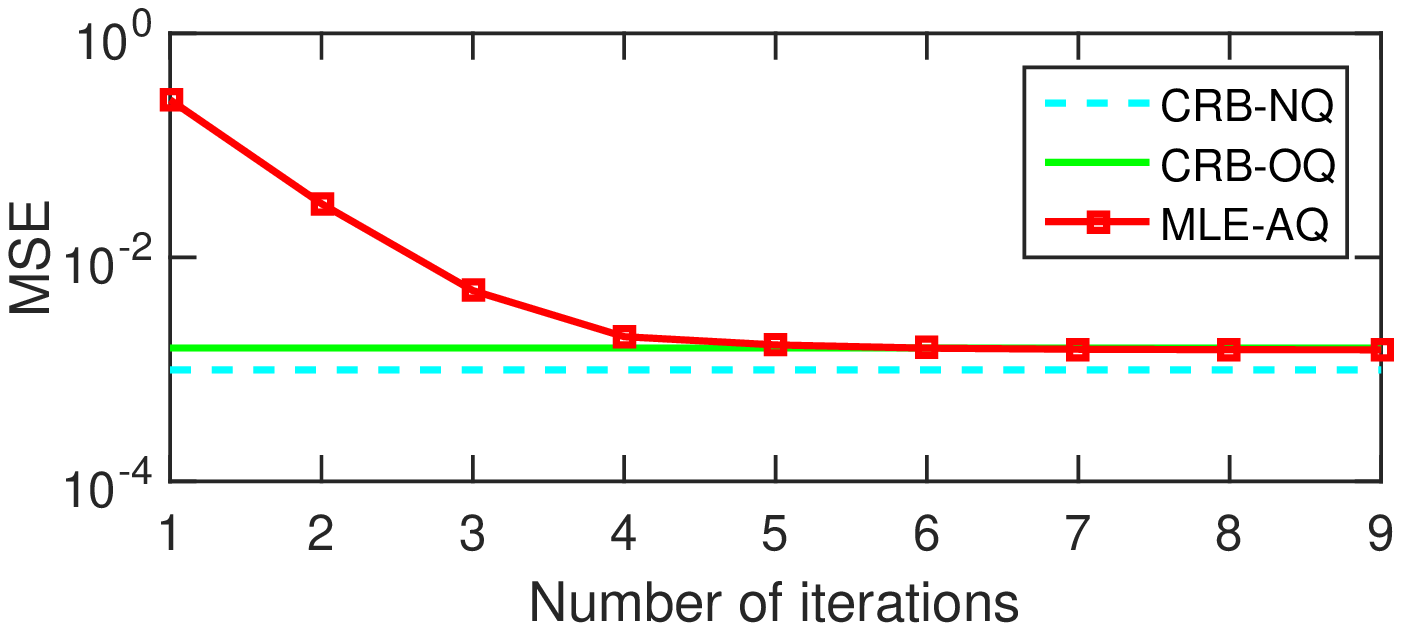}\\
(a). $K=8$, $L=32$ and $\text{SNR}=15$ dB. \\
\includegraphics[width=3.5in]{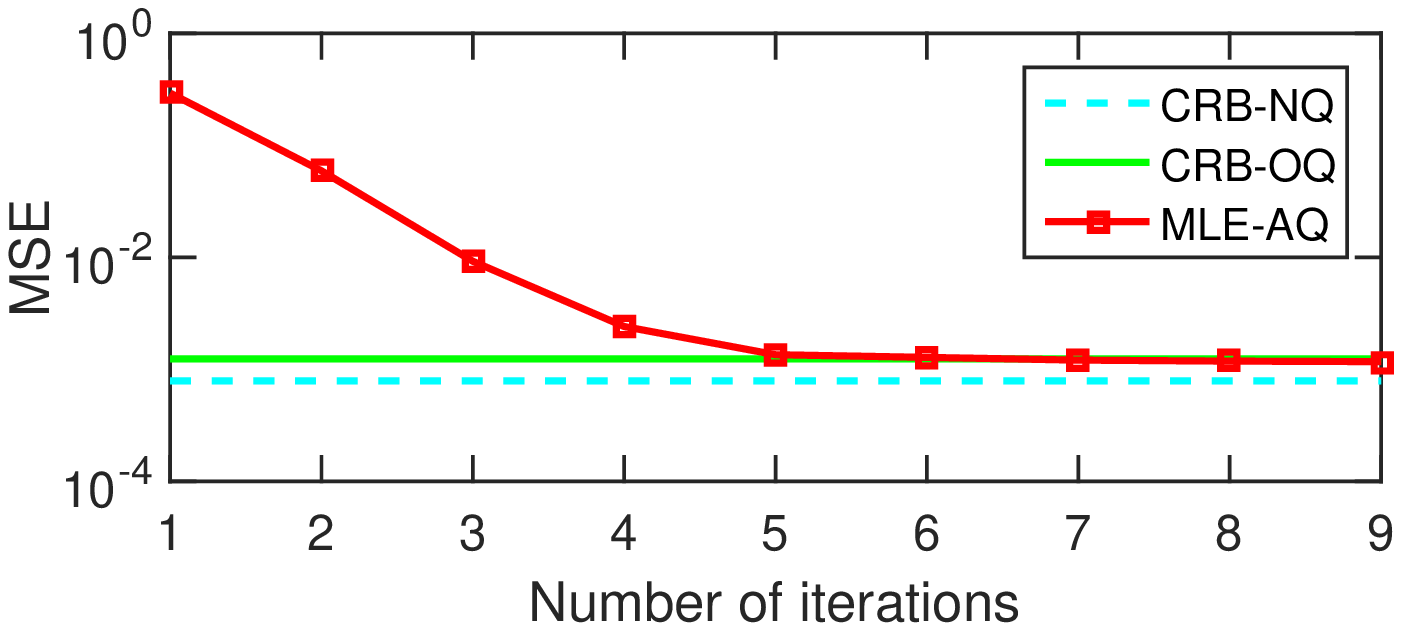}\\
(b). $K=16$, $L=40$ and $\text{SNR}=15$ dB.
\end{tabular}
  \caption{MSEs of the AQ scheme as a function of the number of iterations.}
   \label{fig3}
\end{figure}

We first examine the estimation performance of our proposed AQ
scheme which adaptively adjusts the thresholds based on the
previous estimate of the channel. Fig. \ref{fig3} plots the
mean-squared errors (MSEs) vs. the number of iterations for the AQ
scheme, where we set $K=8$, $L=32$ for Fig. (a) and $K=16$, $L=40$
for Fig. (b). The SNR is set to 15dB. The MSE is calculated as
\begin{align}
\text{MSE}=\frac{1}{K M}
\|\boldsymbol{H}-\boldsymbol{\hat{H}}\|_F^2
\end{align}
To better illustrate the effectiveness of the AQ scheme, we also
include the CRB results in Fig. \ref{fig3}. in which the CRB-OQ,
given by (\ref{CRB-Q}), represents the theoretical lower bound on
the estimation errors of any unbiased estimator using optimal
thresholds for one-bit quantization, and the CRB-NQ, given by
(\ref{CRB-NQ}), represents the lower bound on the estimation
errors of any unbiased estimator which has access to the original
observations. From Fig. \ref{fig3}, we see that our proposed AQ
scheme approaches the theoretical lower bound CRB-OQ within only a
few (say, 5) iterations, and achieves performance close to the CRB
associated with the NQ scheme. This result demonstrates the
effectiveness of the AQ scheme in searching for the optimal
thresholds. In the rest of our simulations, we set the maximum
number of iterations, $i_{\text{max}}$, equal to 5 for the AQ
scheme.


\begin{figure}[!t]
\centering
\includegraphics[width=3.5in]{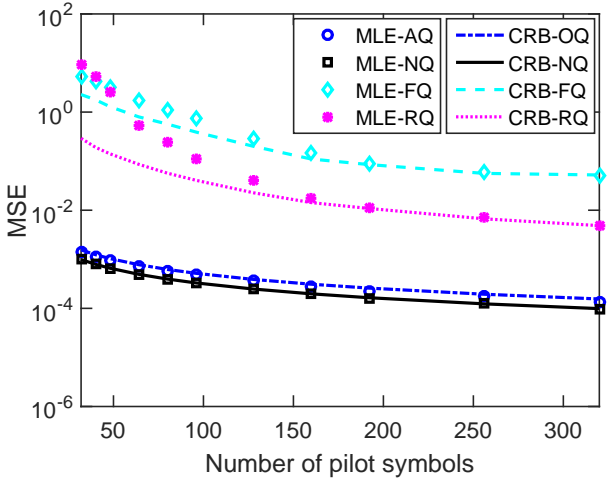}
\caption{MSEs vs. number of pilot symbols, where $K=8$ and
$\text{SNR}=15$ dB.} \label{fig4}
\end{figure}

\begin{figure}[!t]
\centering
\includegraphics[width=3.5in]{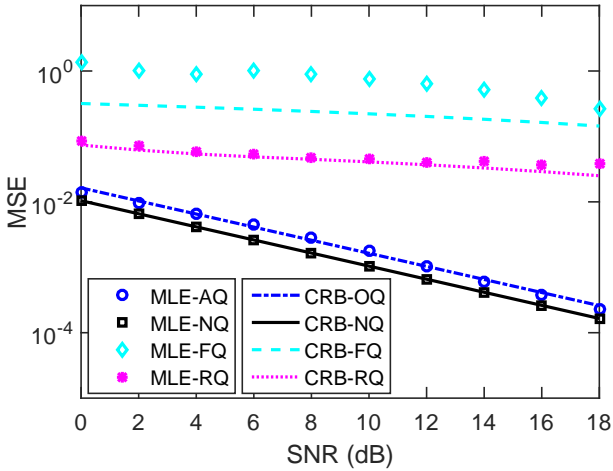}
\caption{MSEs vs. SNR(dB), where $K=8$ and $L=96$.} \label{fig5}
\end{figure}


We now compare the estimation performance of different schemes.
Fig. \ref{fig4} plots the MSEs of respective schemes as a function
of the number of pilot symbols, $L$, where we set $K=8$ and
$\text{SNR}=15\text{dB}$. The corresponding CRBs of these schemes
are also included. Note that the CRBs for the FQ and the RQ
schemes can be obtained by substituting the thresholds into
(\ref{CRB}). Results are averaged over $10^3$ independent runs,
with the channel and the pilot sequences randomly generated for
each run. From Fig. \ref{fig4}, we can see that the proposed AQ
scheme outperforms the FQ and RQ schemes by a big margin. This
result corroborates our analysis that an optimal choice of the
quantization thresholds helps achieve a substantial performance
improvement. In particular, the AQ scheme needs less than 30 pilot
symbols to achieve a decent estimation accuracy with a MSE of
0.01, while the FQ and RQ schemes require a much larger number of
pilot symbols to attain a same estimation accuracy. On the other
hand, we should note that although the AQ scheme has the potential
to achieve performance close to the NQ scheme, the implementation
of the AQ is more complicated since it involves an iterative
process to learn the optimal thresholds. In contrast, our proposed
RQ scheme is as simple as the FQ scheme to implement, meanwhile it
presents a clear performance advantage over the FQ scheme. We can
see from Fig. \ref{fig4} that the RQ requires about 100 symbols to
achieve a MSE of 0.1, whereas the FQ needs about 250 pilot symbols
to reach a same estimation accuracy. The reason why the RQ
performs better than the FQ is that some of the thresholds
produced according to (\ref{multi-thresholding}) are likely to be
close to the optimal thresholds. In Fig. \ref{fig5}, we plot the
MSEs of respective schemes under different SNRs, where we set
$K=8$ and $L=96$. Similar conclusions can be made from Fig.
\ref{fig5}.

\begin{figure}[!t]
\centering
\includegraphics[width=3.5in]{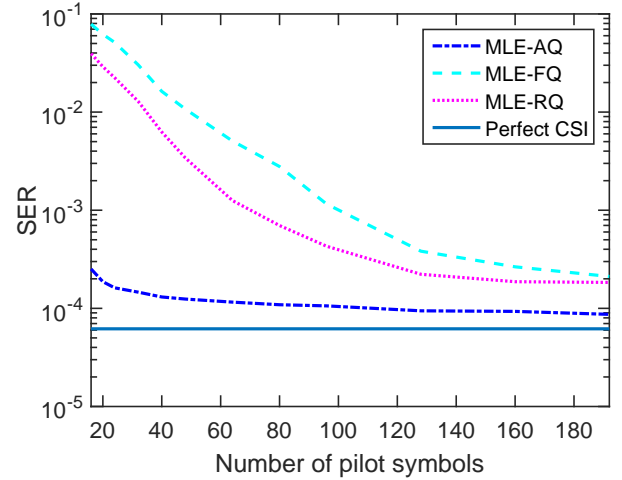}
\caption{SERs vs. number of pilot symbols, where $K=8$, $M=64$ and
$\text{SNR}=5\text{dB}$.} \label{fig6}
\end{figure}

\begin{figure}[!t]
\centering
\includegraphics[width=3.5in]{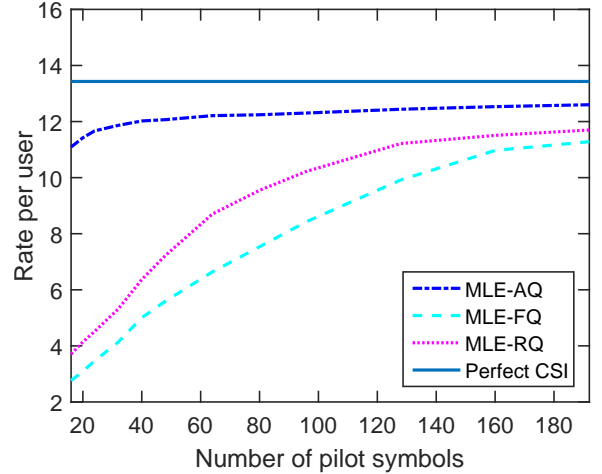}
\caption{Achievable rates vs. number of pilot symbols, where
$K=8$, $M=64$ and $\text{SNR}=5\text{dB}$.} \label{fig7}
\end{figure}


Next, we examine the effect of channel estimation accuracy on the
symbol error rate (SER) performance. For each scheme, after the
channel is estimated, a near maximum likelihood detector
\cite{ChoiMo16} developed for one-bit massive MIMO is adopted for
symbol detection. For a fair comparison, in the symbol detection
stage, the quantization thresholds are all set equal to zero, as
assumed in \cite{ChoiMo16}. In our experiments, QPSK symbols are
transmitted by all users. Fig. \ref{fig6} plots the SERs of
respective schemes vs. the number of pilot symbols, where we set
$K=8$, $M=64$, and $\text{SNR}=5\text{dB}$. Results are averaged
over all $K$ users. The SER performance obtained by assuming
perfect channel knowledge is also included. It can be seen that
the SER performance improves as the number of pilot symbols
increases, which is expected since a more accurate channel
estimate can be obtained when more pilot symbols are available for
channel estimation. We also observe that the AQ scheme, using a
moderate number (about 120 symbols that is only 15 times the
number of users) of pilot symbols, can achieve SER performance
close to that attained by assuming perfect channel knowledge.
Moreover, the SER results, again, demonstrate the superiority of
the RQ over the FQ scheme. In order to attain a same SER, say,
$10^{-3}$, the RQ requires about 60 pilot symbols, whereas the FQ
requires about 100 pilot symbols.

In Fig. \ref{fig7}, the achievable rates of respective schemes vs.
the number of pilot symbols are depicted, where we set $K=8$,
$M=64$, and $\text{SNR}=5\text{dB}$. The achievable rate for the
$k$th user is calculated as \cite{MollenChoi17}
\begin{align}
R_k \triangleq \log_2 \left( 1+ \frac{
|\mathbb{E}\left[s_k^*(t)\hat{s}_k(t)\right]|^2 }
{\mathbb{E}\left[|\hat{s}_k(t)|^2\right] -
|\mathbb{E}\left[s_k^*(t)\hat{s}_k(t)\right]|^2} \right)
\end{align}
where $s_k(t)$ is the transmit symbol of the $k$th user at time
$t$, $()^{*}$ denotes the conjugate, and $\hat{s}_k(t)$ is the
estimated symbol of $s_k(t)$, which is obtained via the near
maximum likelihood detector by using the channel estimated by
respective schemes. The achievable rate we plotted is averaged
over all $K$ users. It can be seen that, even with a moderate
number of pilot symbols (about 5 times the number of users), the
AQ scheme can provide an achievable rate close to that of the
perfect CSI case, whereas the achievable rates attained by the
other two schemes are far below the level of the AQ scheme.
Compared to the FQ, the RQ scheme achieves an increase of about 30
percent in the achievable rate.


\section{Conclusions} \label{sec:conclusion}
Assuming one-bit ADCs at the BS, we studied the problem of one-bit
quantization design and channel estimation for uplink multiuser
massive MIMO systems. Specifically, based on the derived CRB
matrix, we examined the impact of quantization thresholds on the
channel estimation performance. Our theoretical analysis revealed
that using one-bit ADCs can achieve an estimation error close to
that attained by using infinite-precision ADCs, given that the
quantization thresholds are optimally set. Our analysis also
suggested that the optimal quantization thresholds are dependent
on the unknown channel parameters. We developed two practical
quantization design schemes, namely, an adaptive quantization
scheme which adaptively adjusts the thresholds such that the
thresholds converge to the optimal thresholds, and a random
quantization scheme which randomly generates a set of
non-identical thresholds based on some statistical prior knowledge
of the channel. Simulation results showed that the proposed
quantization schemes achieved a significant performance
improvement over the fixed quantization scheme that uses a fixed
(typically zero) quantization threshold, and thus can help
substantially reduce the training overhead in order to attain a
same estimation accuracy target.

\useRomanappendicesfalse
\appendices

\section{Proof of Concavity of The Log-Likelihood Function (\ref{log-PMF})} \label{appA}
It can be easily verified that
$f_{w}(\boldsymbol{a}_n^{T}\boldsymbol{h}-\tau_n)$ is log-concave
in $\boldsymbol{h}$ since the Hessian matrix of $\log
f_{w}(\boldsymbol{a}_n^{T}\boldsymbol{h}-\tau_n)$, which is given
by
\begin{align}
\frac {\partial^2 \log
f_{w}(\boldsymbol{a}_n^{T}\boldsymbol{h}-\tau_n)} {{\partial
\boldsymbol{h} \partial \boldsymbol{h}^T}} = -
\frac{\boldsymbol{a}_n \boldsymbol{a}_n^{T}} {\sigma^2}
\end{align}
is negative semidefinite. Consequently the corresponding
cumulative density function (CDF) and complementary CDF (CCDF),
which are integrals of the log-concave function
$f_{w}(\boldsymbol{a}_n^{T}\boldsymbol{h}-\tau_n)$ over convex
sets $(-\infty,\tau_n)$ and $(\tau_n,\infty)$ respectively, are
also log-concave, and their logarithms are concave. Since
summation preserves concavity, $L(\boldsymbol{h})$ is a concave
function of $\boldsymbol{h}$.

\section{Proof of Theorem \ref{theorem1}} \label{appB}
Define a new variable $z_n\triangleq
\boldsymbol{a}_n^{T}\boldsymbol{h}$ and define
\begin{align}
l(z_n) & \triangleq  \frac{1-b_n}{2}  \mathrm{log} [1-F_{w}(z_n-\tau_n)]   \nonumber \\
& \quad + \frac{1+b_n}{2} \mathrm{log} [F_{w}(z_n-\tau_n)].
\end{align}
The first and second-order derivative of $L(\boldsymbol{h})$ are
given by
\begin{align}
\frac {\partial L(\boldsymbol{h})} {\partial \boldsymbol{h}} =
\sum_{n=1}^{N} \frac {\partial l(z_n)} {\partial z_n} \frac
{\partial z_n} {\partial \boldsymbol{h}} = \sum_{n=1}^{N} \frac
{\partial l(z_n)} {\partial z_n} \boldsymbol{a}_n
\end{align}
and
\begin{align}
\frac {\partial^2 L(\boldsymbol{h})} {\partial \boldsymbol{h}
\partial \boldsymbol{h}^T}
&= \sum_{n=1}^{N} \boldsymbol{a}_n \frac {\partial^2 l(z_n)}
{\partial z_n^2}
\frac {\partial z_n} {\partial \boldsymbol{h}^T} \nonumber \\
&= \sum_{n=1}^{N} \frac {\partial^2 l(z_n)} {\partial z_n^2}
\boldsymbol{a}_n \boldsymbol{a}_n^T .
\end{align}
where
\begin{align}
\frac {\partial l(z_n)} {\partial z_n} &= \frac{1-b_n}{2}  \frac{f_{w}(z_n-\tau_n)}{F_{w}(z_n-\tau_n)-1}   \nonumber \\
& \quad + \frac{1+b_n}{2}
\frac{f_{w}(z_n-\tau_n)}{F_{w}(z_n-\tau_n)} \label{deriv1}
\end{align}
and
\begin{align}
\frac {\partial^2 l(z_n)} {\partial z_n^2} =& \frac{1-b_n}{2}
\bigg[
\frac{f'_{w}(z_n-\tau_n)}{F_{w}(z_n-\tau_n)-1} \nonumber \\
& -\frac{f_{w}^2 (z_n-\tau_n)}{(F_{w}(z_n-\tau_n)-1)^2} \bigg] + \frac{1+b_n}{2}   \nonumber \\
&  \cdot \bigg[ \frac{f'_{w}(z_n-\tau_n)}{F_{w}(z_n-\tau_n)} -
\frac{f_{w}^2 (z_n-\tau_n)}{F_{w}^2 (z_n-\tau_n)} \bigg]
\label{deriv2}
\end{align}
where $f_{w}(x)$ denotes the probability density function (PDF) of
$w_n$, and $f'_{w}(x)\triangleq\frac{\partial f_{w}(x)}{\partial
x}$.

Therefore, the Fisher information matrix (FIM) of the estimation
problem is given as
\begin{align}
J(\boldsymbol{h}) & = -E \left[\frac {\partial^2
L(\boldsymbol{h})} {\partial \boldsymbol{h} \partial
\boldsymbol{h}^T} \right]
= - \sum_{n=1}^{N} E_{b_n} \left[ \frac {\partial^2 l(z_n)} {\partial z_n^2} \right]
\boldsymbol{a}_n \boldsymbol{a}_n^T  \nonumber \\
& \stackrel {(a)}{=} \sum_{n=1}^{N} \frac {f_{w}^2
(\boldsymbol{a}_n^{T}\boldsymbol{h}-\tau_n)}
{F_{w}(\boldsymbol{a}_n^{T}\boldsymbol{h}-\tau_n)(1-F_{w}(\boldsymbol{a}_n^{T}\boldsymbol{h}-\tau_n))}
\boldsymbol{a}_n \boldsymbol{a}_n^T
\end{align}
where $E_{b_n}[\cdot]$ denotes the expectation with respect to the
distribution of $b_n$, and $(a)$ follows from the fact that
$b_{n}$ is a binary random variable with
$P(b_{n}=1|\tau_n,z_n)=F_{w}(z_n-\tau_n)$ and
$P(b_{n}=-1|\tau_n,z_n)=1-F_{w}(z_n-\tau_n)$. This completes the
proof.



\section{Proof of Proposition \ref{proposition1}} \label{appE}
Before proceeding, we first introduce the following lemma.
\newtheorem{lemma}{Lemma}
\begin{lemma} \label{lemma2}
For $x\ge 0$, define
\begin{align}
\bar{F}(x) \triangleq \int_{0}^{x} f(u)\mathrm{d}u
\end{align}
where $f(\cdot)$ denotes the PDF of a real-valued Gaussian random
variable with zero-mean and unit variance. We have $\bar{F}(x)$
upper bounded by
\begin{align}
\bar{F}(x) \le \frac{1}{2} \sqrt{ 1-e^{-\frac{2x^2}{\pi}} } .
\end{align}
\end{lemma}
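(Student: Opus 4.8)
The plan is to reduce the claimed inequality $\bar F(x)\le\frac12\sqrt{1-e^{-2x^2/\pi}}$ to an equivalent statement about a single auxiliary function and then verify that function is nonnegative by examining its value and derivative at $x=0$ together with its monotonicity. Since both sides of the inequality are nonnegative for $x\ge 0$, I would square them: the claim is equivalent to $4\bar F(x)^2\le 1-e^{-2x^2/\pi}$, i.e. to
\begin{align}
G(x)\triangleq 1-e^{-2x^2/\pi}-4\bar F(x)^2\ge 0,\qquad x\ge 0. \nonumber
\end{align}
First I would record the boundary data: $\bar F(0)=0$ and $e^{0}=1$, so $G(0)=0$; hence it suffices to show $G$ is nondecreasing on $[0,\infty)$, or at least that $G'(x)\ge 0$ there.

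Next I would compute $G'(x)$. Using $\bar F'(x)=f(x)=\frac{1}{\sqrt{2\pi}}e^{-x^2/2}$, we get
\begin{align}
G'(x)=\frac{4x}{\pi}e^{-2x^2/\pi}-8\bar F(x)f(x)
=\frac{4x}{\pi}e^{-2x^2/\pi}-\frac{8}{\sqrt{2\pi}}e^{-x^2/2}\bar F(x). \nonumber
\end{align}
So $G'(x)\ge 0$ is equivalent (after dividing by the positive quantity $4e^{-x^2/2}$) to
\begin{align}
\bar F(x)\le \frac{x}{\pi}\sqrt{\frac{\pi}{2}}\,e^{x^2/2-2x^2/\pi}
=\frac{x}{\sqrt{2\pi}}\,e^{-x^2(\,4-\pi\,)/(2\pi)}. \nonumber
\end{align}
I would try to establish this pointwise bound on $\bar F$ directly. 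One clean route: bound the integrand. For $u\in[0,x]$ write $f(u)=\frac{1}{\sqrt{2\pi}}e^{-u^2/2}$ and note that on $[0,x]$ we have $e^{-u^2/2}\le e^{-u^2(4-\pi)/(2\pi)}\cdot e^{\,u^2(4-\pi)/(2\pi)-u^2/2}$; the exponent $u^2(4-\pi)/(2\pi)-u^2/2=-u^2( \pi-2)/(2\pi)\le 0$, which unfortunately goes the wrong way, so a cruder majorization is needed. A more promising approach is to use the known Gaussian tail/erf bound itself: the inequality $\operatorname{erf}(t)\le\sqrt{1-e^{-4t^2/\pi}}$ (a classical sharp bound due to the fact that $\operatorname{erf}$ and $\sqrt{1-e^{-4t^2/\pi}}$ agree to second order at $0$ and the latter dominates) is, after the substitution $t=x/\sqrt2$ and $\bar F(x)=\frac12\operatorname{erf}(x/\sqrt2)$, exactly the assertion of the lemma. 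So the cleanest plan is: (i) state the target as $\operatorname{erf}(t)\le\sqrt{1-e^{-4t^2/\pi}}$; (ii) square and rearrange to $H(t)\triangleq 1-e^{-4t^2/\pi}-\operatorname{erf}(t)^2\ge0$; (iii) check $H(0)=0$; (iv) compute $H'(t)=\frac{8t}{\pi}e^{-4t^2/\pi}-\frac{4}{\sqrt\pi}e^{-t^2}\operatorname{erf}(t)$ and show $H'(t)\ge0$, which reduces to $\operatorname{erf}(t)\le\frac{2t}{\sqrt\pi}e^{t^2-4t^2/\pi}$.

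The main obstacle is exactly step (iv): proving the reduced inequality $\operatorname{erf}(t)\le\frac{2t}{\sqrt\pi}e^{-(4-\pi)t^2/\pi}$ for all $t\ge0$. I expect to handle it by the same device one level down — define $\phi(t)$ as the difference of the two sides, note $\phi(0)=0$, and differentiate: $\phi'(t)=\frac{2}{\sqrt\pi}e^{-t^2}-\frac{2}{\sqrt\pi}e^{-(4-\pi)t^2/\pi}\bigl(1-\tfrac{2(4-\pi)}{\pi}t^2\bigr)$, then show $\phi'(t)\ge0$ by comparing the exponential factors and the polynomial factor. Since $(4-\pi)/\pi<1$, the factor $e^{-(4-\pi)t^2/\pi}\ge e^{-t^2}$, so one needs $1-\frac{2(4-\pi)}{\pi}t^2\le e^{-(2\pi-4)t^2/\pi}$ wherever the left side is positive, and the claim is immediate where it is negative; the surviving inequality is of the form $1-cs\le e^{-ds}$ with $s=t^2\ge0$ and $0<d=c<\dots$, which follows from $1-cs\le e^{-cs}\le e^{-ds}$ once one checks $d\le c$. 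Verifying these constant comparisons ($2(4-\pi)/\pi$ versus $(2\pi-4)/\pi$, etc.) is routine arithmetic. Thus the whole argument is a two- or three-fold application of "value zero at the origin plus nonnegative derivative," and the only care needed is tracking the exponents and the constant $(4-\pi)/\pi$ correctly.
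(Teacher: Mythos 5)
Your reduction of the lemma to the nonnegativity of $G(x)\triangleq 1-e^{-2x^2/\pi}-4\bar F(x)^2$ is fine, and your computation of $G'$ is correct, but the step ``it suffices to show $G'(x)\ge 0$ on $[0,\infty)$'' is where the argument fails, and it cannot be patched within your scheme. Since $\bar F(x)\to \tfrac12$ as $x\to\infty$, we have $G(x)\to 1-0-4\cdot\tfrac14=0$; and $G$ is strictly positive near the origin (its Taylor expansion starts with a positive multiple of $x^4$). A function with $G(0)=0$, $G>0$ somewhere, and $G(\infty)=0$ cannot be nondecreasing, so $G'$ must change sign. Concretely, the inequality to which you reduce $G'\ge0$, namely $\bar F(x)\le \frac{x}{\sqrt{2\pi}}\,e^{-(4-\pi)x^2/(2\pi)}$ (equivalently $\operatorname{erf}(t)\le \frac{2t}{\sqrt{\pi}}e^{-(4-\pi)t^2/\pi}$), is false: its right-hand side tends to $0$ as $x\to\infty$ while the left-hand side tends to $\tfrac12$, and numerically it already fails around $t\approx 1.5$. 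The same defect shows up in your innermost step, where you need $1-cs\le e^{-ds}$ with $c=2(4-\pi)/\pi\approx 0.546$ and $d=(2\pi-4)/\pi\approx 0.727$: since $d>c$, the chain $1-cs\le e^{-cs}\le e^{-ds}$ runs in the wrong direction. The root cause is that the bound of the lemma is tight at both ends ($x=0$ and $x\to\infty$), so the difference of the two sides must rise and then fall; any derivative-based proof has to control that sign change (e.g.\ show $G'$ vanishes exactly once on $(0,\infty)$ and combine with $G(0)=G(\infty)=0$), which your sketch does not do and which is substantially harder than the ``zero value plus nonnegative derivative'' device.

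The paper proves the lemma by an entirely different, geometric argument (P\'olya's classical proof of this erf bound): for i.i.d.\ standard Gaussians $X,Y$, the square $D_1=[0,x]^2$ satisfies $P((X,Y)\in D_1)=\bar F^2(x)$, while the quarter-disk $D_2=\{u\ge0,\,v\ge0,\,u^2+v^2\le 4x^2/\pi\}$ of equal area satisfies $P((X,Y)\in D_2)=\tfrac14\bigl(1-e^{-2x^2/\pi}\bigr)$ by polar coordinates; because the joint density is radially decreasing, the density is at most $\frac{1}{2\pi}e^{-2x^2/\pi}$ on $D_1\setminus D_2$ and at least that value on $D_2\setminus D_1$, so the square captures no more probability than the quarter-disk, which is exactly the claimed inequality. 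If you wish to keep an analytic route, you must replace the monotonicity claim by an argument that $G'$ changes sign exactly once from positive to negative (note that the ratio of the two sides also tends to $1$ at both ends, so a simple monotone-ratio argument will not work either); otherwise, the geometric proof is the cleaner path.
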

\begin{proof}
See Appendix \ref{appF}.
\end{proof}

Define the function
\begin{align}
\bar{g}(x) \triangleq \frac {f^2 (x)} {F(x)(1-F(x))}
\end{align}
where $f(\cdot)$ and $F(\cdot)$ denotes the PDF and CDF of a
real-valued Gaussian random variable with zero-mean and unit
variance respectively. Invoking Lemma \ref{lemma2}, we have
\begin{align}
\bar{g}(x) =\frac {f^2 (x)} {\frac{1}{4}-\bar{F}^2(x)} \le
\frac{2}{\pi} e^{-(1-\frac{2}{\pi})x^2} \le \frac{2}{\pi}.
\end{align}
and $\bar{g}(x) =\frac{2}{\pi}$ if and only if $x=0$. Noting that
\begin{align}
\frac{1}{\sigma^2} \bar{g} \left(
\frac{\boldsymbol{a}_n^{T}\boldsymbol{h}-\tau_n}{\sigma} \right) =
g(\tau_n,\boldsymbol{a}_n) .
\end{align}
Therefore $g(\tau_n,\boldsymbol{a}_n)$ attains its maximum when
$\tau_n=\boldsymbol{a}_n^{T}\boldsymbol{h}$. The proof is
completed here.

\section{Proof of Lemma \ref{lemma2}} \label{appF}
Define two i.i.d. Gaussian random variables with zero-mean and
unit variance, namely, $X$ and $Y$. The joint distribution
function of $X$ and $Y$ is $f_{XY}(x,y)=f(x)f(y)$. Define two
regions $D_1 \triangleq \{(u,v)\mid 0 \le u \le x , 0 \le v \le x
\}$ and $D_2 \triangleq \{(u,v)\mid u \ge 0 , v \ge 0, u^2+v^2\le
\frac{4x^2}{\pi} \}$. Obviously, the areas of $D_1$ and $D_2$ are
the same, i.e., $\mu(D_1)=\mu(D_2)$, where $\mu(\cdot)$ denote the
area of a region. The probabilities of $(X,Y)$ belonging in these
two regions can be computed as
\begin{align}
P((X,Y)\in D_1) &= \iint_{D_1} f_{XY}(u,v) \mathrm{d}u\mathrm{d}v \nonumber \\
&= \bar{F}^2(x) \\
P((X,Y)\in D_2) &= \iint_{D_2} f_{XY}(u,v) \mathrm{d}u\mathrm{d}v \nonumber \\
&= \frac{1}{4} \left(1-e^{-\frac{2x^2}{\pi}}\right)
\end{align}
Let $S_1\setminus S_2$ denote the set obtained by excluding
$S_2\cap S_1$ from $S_1$. Clearly, we have
\begin{align}
\mu(D_1 \setminus D_2)=\mu(D_2 \setminus D_1) \label{appF:eqn1}
\end{align}
Also, according to the definition of $D_1$ and $D_2$, we have
\begin{align}
f_{XY}(u,v)&\le \frac{1}{2\pi} e^{-\frac{2x^2}{\pi}}, \quad (u,v)\in D_1 \setminus D_2 \\
f_{XY}(u,v)&\ge \frac{1}{2\pi} e^{-\frac{2x^2}{\pi}}, \quad
(u,v)\in D_2 \setminus D_1 \label{appF:eqn2}
\end{align}
Combining (\ref{appF:eqn1})--(\ref{appF:eqn2}), we arrive at
\begin{align}
\iint_{D_1 \setminus D_2} f_{XY}(u,v) \mathrm{d}u\mathrm{d}v \le
\iint_{D_2 \setminus D_1} f_{XY}(u,v) \mathrm{d}u\mathrm{d}v
\end{align}
From the above inequality, we have $P((X,Y)\in D_1)\le P((X,Y)\in
D_2)$, i.e.
\begin{align}
\bar{F}^2(x)\leq\frac{1}{4}
\left(1-e^{-\frac{2x^2}{\pi}}\right)\Rightarrow \bar{F}(x) \le
\frac{1}{2} \sqrt{ 1-e^{-\frac{2x^2}{\pi}} }
\end{align}
This completes the proof.

\section{Proof of Theorem \ref{theorem2}} \label{appC}
Note that from the constraint
$\text{tr}(\boldsymbol{X}\boldsymbol{X}^H)\leq P$, we can easily
derive that
\begin{align}
\text{tr}(\boldsymbol{A}^T \boldsymbol{A})\leq 2M P
\end{align}
To prove Theorem \ref{theorem2}, let us first consider a new
optimization that has the same objective function as (\ref{opt2})
while with a relaxed constraint:
\begin{align}
\min_{\boldsymbol{A}} \quad & \frac{\pi\sigma^2}{2}\text{tr}
\left\{ \left( \boldsymbol{A}^T \boldsymbol{A} \right)^{-1}
\right\}
\nonumber\\
\text{s.t.} \quad & \text{tr}(\boldsymbol{A}^T\boldsymbol{A})\leq
2M P  \label{appC:opt1}
\end{align}
Clearly, the feasible region defined by the constraints in
(\ref{opt2}) is a subset of that defined by (\ref{appC:opt1}).
Since $\text{tr}(\boldsymbol{Z}^{-1})$ is convex over the set of
positive definite matrix, the optimization (\ref{appC:opt1}) is
convex. Its optimum solution is given as follows.
\begin{lemma} \label{lemma1}
Consider the following optimization problem
\begin{align}
\min_{\boldsymbol{Z}}\quad &\text{tr}(\boldsymbol{Z}^{-1})
\nonumber\\
\text{s.t.}\quad & \text{tr}(\boldsymbol{Z})\leq P_0
\label{appC:opt2}
\end{align}
where $\boldsymbol{Z}\in\mathbb{R}^{p\times p}$ is positive
definite. The optimum solution to (\ref{appC:opt2}) is given by
$\boldsymbol{Z}=(P_0/p)\boldsymbol{I}$ and the minimum objective
function value is $p^2/P_0$.
\end{lemma}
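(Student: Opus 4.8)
The plan is to reduce the matrix optimization to a scalar one by diagonalization, and then apply the Cauchy--Schwarz (equivalently AM--HM) inequality. First I would invoke the spectral theorem: since $\boldsymbol{Z}\succ\mathbf{0}$ is symmetric, write $\boldsymbol{Z}=\boldsymbol{U}\,\mathrm{diag}(\lambda_1,\dots,\lambda_p)\,\boldsymbol{U}^T$ with $\lambda_i>0$ and $\boldsymbol{U}$ orthogonal. Because both $\mathrm{tr}(\boldsymbol{Z})=\sum_i\lambda_i$ and $\mathrm{tr}(\boldsymbol{Z}^{-1})=\sum_i\lambda_i^{-1}$ depend on $\boldsymbol{Z}$ only through its eigenvalues, the feasible set and the objective of (\ref{appC:opt2}) are invariant under orthogonal conjugation, so the problem is equivalent to
\begin{align}
\min_{\lambda_1,\dots,\lambda_p>0}\ \sum_{i=1}^p \frac{1}{\lambda_i}\quad \text{s.t.}\quad \sum_{i=1}^p \lambda_i\le P_0 .
\end{align}

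Next I would argue that the power constraint is active at any optimum: if $\sum_i\lambda_i<P_0$, scaling every $\lambda_i$ up by a common factor greater than one strictly decreases the objective while keeping feasibility, a contradiction. Hence $\sum_i\lambda_i=P_0$ at the optimum. Applying the Cauchy--Schwarz inequality to the vectors $(\sqrt{\lambda_i})_i$ and $(1/\sqrt{\lambda_i})_i$ gives
\begin{align}
\left(\sum_{i=1}^p\lambda_i\right)\left(\sum_{i=1}^p\frac{1}{\lambda_i}\right)\ge\left(\sum_{i=1}^p 1\right)^2=p^2,
\end{align}
so $\sum_i 1/\lambda_i\ge p^2/\sum_i\lambda_i=p^2/P_0$. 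Equality in Cauchy--Schwarz holds if and only if all the $\lambda_i$ are equal, which combined with $\sum_i\lambda_i=P_0$ forces $\lambda_i=P_0/p$ for every $i$, i.e.\ $\boldsymbol{Z}=(P_0/p)\boldsymbol{I}$, at which the objective attains the value $p\cdot(p/P_0)=p^2/P_0$.

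There is essentially no hard step here; the result is a textbook convex program. The only points that deserve a word of care are (i) justifying that the constraint is active, handled by the monotone-scaling argument above (alternatively via the KKT stationarity condition $-\lambda_i^{-2}+\nu=0$ with a single multiplier $\nu>0$, which directly forces all $\lambda_i$ equal), and (ii) noting that the reduction to eigenvalues loses nothing because of the orthogonal invariance, so the optimal $\boldsymbol{Z}$ may be taken to be a scalar multiple of the identity without loss of generality. Substituting $P_0=2MP$ and $p=2MK$ then yields the minimum value $(2MK)^2/(2MP)=2MK^2/P$, which after multiplication by the prefactor $\pi\sigma^2/2$ in (\ref{appC:opt1}) gives $(\pi\sigma^2 MK^2)/P$; it remains only to check that the choice $\boldsymbol{X}\boldsymbol{X}^H=(P/K)\boldsymbol{I}$ makes $\boldsymbol{A}^T\boldsymbol{A}=(P/K)\boldsymbol{I}_{2MK}$, so that the relaxed optimum in (\ref{appC:opt1}) is actually attained within the smaller feasible set of (\ref{opt2}).
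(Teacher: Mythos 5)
Your proof is correct. It shares the paper's first step---reducing the matrix problem to its eigenvalues via the spectral theorem, after which both objective and constraint depend only on $\lambda_1,\dots,\lambda_p$---but the core argument differs: the paper forms the Lagrangian of the reduced problem and works through the KKT conditions ($-1/d_i^2+\lambda-\nu_i=0$, complementary slackness, etc.) to conclude $\nu_i=0$, $d_1=\dots=d_p$, and $\sum_i d_i=P_0$, whereas you establish activity of the constraint by a monotone-scaling argument and then invoke Cauchy--Schwarz, $\bigl(\sum_i\lambda_i\bigr)\bigl(\sum_i 1/\lambda_i\bigr)\ge p^2$, with its equality condition forcing all eigenvalues equal. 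Your route is more elementary (no multipliers, no verification of constraint qualifications) and yields the uniqueness of the minimizer directly from the equality case of Cauchy--Schwarz; the paper's KKT route is more mechanical and generalizes more readily to variants with additional or non-symmetric constraints where a closed-form inequality is not available. Your closing remarks about substituting $P_0=2MP$, $p=2MK$ and checking that $\boldsymbol{X}\boldsymbol{X}^H=(P/K)\boldsymbol{I}$ gives $\boldsymbol{A}^T\boldsymbol{A}=(P/K)\boldsymbol{I}$ belong to the proof of Theorem \ref{theorem2} rather than the lemma itself, but they are accurate and consistent with Appendix \ref{appC}.
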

\begin{proof}
See Appendix \ref{appD}.
\end{proof}

From Lemma \ref{lemma1}, we know that any $\boldsymbol{A}$
satisfying
\begin{align}
\boldsymbol{A}^T \boldsymbol{A} = (P/K) \boldsymbol{I}
\label{appC:eqn1}
\end{align}
is an optimal solution to (\ref{appC:opt1}). Note that the set of
feasible solutions (\ref{appC:opt1}) subsumes the feasible
solution set of (\ref{opt2}). Hence, if the optimal solution to
(\ref{appC:opt1}) is meanwhile a feasible solution of
(\ref{opt2}), then this solution is also an optimal solution to
(\ref{opt2}). It is easy to verify that if (\ref{theorem2:eqn1})
holds valid, the resulting $\boldsymbol{A}$ satisfies
(\ref{appC:eqn1}) and is thus an optimal solution to
(\ref{appC:opt1}). As a consequence, it is also an optimal
solution to (\ref{opt2}). This completes the proof.


\section{Proof of Lemma \ref{lemma1}} \label{appD}
Let $\boldsymbol{Z}=\boldsymbol{U}\boldsymbol{D}\boldsymbol{U}^T$
denote the eigenvalue decomposition of $\boldsymbol{Z}$, where
$\boldsymbol{U}\in\mathbb{R}^{p\times p}$ and
$\boldsymbol{D}\in\mathbb{R}^{p\times p}$. By replacing
$\boldsymbol{Z}$ with
$\boldsymbol{U}\boldsymbol{D}\boldsymbol{U}^T$, the optimization
(\ref{appC:opt2}) is reduced to determining the diagonal matrix
$\boldsymbol{D}\triangleq \text{diag}(d_1,\dots,d_{p})$
\begin{align}
\min_{\{d_i\}} \ & \sum_{i=1}^{p} \frac{1}{d_i}
\nonumber\\
\text{s.t.} \ \,
& \sum_{i=1}^{p} {d_i} \leq P_0 \nonumber\\
& d_i > 0, \qquad \forall i\in\{1,\dots,p\}\label{opt5}
\end{align}
The Lagrangian function associated with (\ref{opt5}) is given by
\begin{align}
L(d_i;\lambda;\nu_i)=\sum_{i=1}^{p} \frac{1}{d_i} + \lambda \left(
\sum_{i=1}^{p} {d_i} - P_0 \right) - \sum_{i=1}^{p} {\nu_i d_i}
\end{align}
with KKT conditions \cite{BoydVandenberghe03} given as
\begin{align}
-\frac{1}{d_i^2}+\lambda-\nu_i=0 & , \quad \forall i \nonumber\\
\lambda\left(\sum_{i=1}^{p} {d_i} - P_0\right)  =0 & \nonumber\\
\lambda\geq 0 & \nonumber\\
\nu_i d_i=0 & ,\quad \forall i \nonumber\\
d_i >0 & , \quad \forall i \nonumber\\
\nu_i\ge 0 & , \quad \forall i \nonumber
\end{align}
From the last three equations, we have $\nu_i=0$, $\forall i$.
Then from the first equation we have
\begin{align}
\lambda=\frac{1}{d_i^2}>0 \label{lambda}
\end{align}
and
\begin{align}
d_1=d_2=\dots=d_{p}.
\end{align}
From (\ref{lambda}) and the second equation, we have
$\sum_{i=1}^{p} {d_i} - P_0 =0$, from which $d_i$ can be readily
solved as $d_i=P_0/p$, $\forall i$, i.e., the optimal
$\boldsymbol{D}$ is given by $\boldsymbol{D}^{\star}= (P_0/p)
\boldsymbol{I}$. Consequently we have $\boldsymbol{Z}^{\star}=
(P_0/p) \boldsymbol{I}$. This completed the proof.

\bibliography{newbib}
\bibliographystyle{IEEEtran}

\end{document}